\definecolor{red}{RGB}{255,0,0}
\definecolor{blue}{RGB}{0,0,255}
\definecolor{green}{RGB}{0,255,0}
\newcommand {\abs}[1]  {\left\vert#1\right\vert}
\newcommand {\set}[1]  {\left\{#1\right\}}
\newcommand {\defined} {\stackrel{def} {=}}
\newcommand {\np}      {\textsc{NP}}
\newcommand {\npc}     {\textsc{NP}\textrm{-complete}}
\newcommand {\bigoh}   {{\cal O}}
\newcommand {\runningtitle}[1] {\vspace{0.5ex}\noindent{\textbf{\boldmath #1:}}}
\newcommand{\ignore}[1] {}
\newcommand {\commentfig}[1] {#1}
\newcommand{\LineComment}[1]{\State~\(\triangleright\)~#1}
\newtheorem{theorem} {Theorem}
\newtheorem{lemma} {Lemma}
\newtheorem{observation}  {Observation} 
\newcommand{\optprobl}[4]
{
  \begin{flushleft}
    \fbox{
      \begin{minipage}{\textwidth}
        \noindent {\textsc {#1}}\\
        {\bf Input:} #2\\
        {\bf Output:} #3\\
        {\bf Objective:} #4
      \end{minipage}
    }
  \end{flushleft}
}
\newcommand{\bb}{{\cal B}}
\newcommand{\uu}{{\cal U}}
\newcommand{\kdefprobl}{$k$-\textsc{DefensiveDomination}}
\newcommand{\kplusdefprobl}{$(k+1)$-\textsc{DefensiveDomination}}
\begin{document}
\begin{frontmatter}
\title{Defensive Domination in Proper Interval Graphs}
\author[bu]{T{\i}naz Ekim\fnref{tubitak}}
\ead{tinaz.ekim@boun.edu.tr}

\author[uo]{Arthur Farley}
\ead{art@cs.uoregon.edu}

\author[uo]{Andrzej Proskurowski}
\ead{andrzej@cs.uoregon.edu}

\author[isik]{Mordechai Shalom}
\ead{cmshalom@gmail.com}

\fntext[tubitak]{Supported by TUBITAK Grant no: 118E799.}
% \cortext[corr]{Corresponding Author}

\address[bu]{Department of Industrial Engineering, Bo\u{g}azi\c{c}i University, Istanbul, Turkey}
\address[uo]{University of Oregon, Eugene, Oregon 97403 USA}
\address[isik]{I\c{s}{\i}k University, Istanbul, Turkey}

\begin{abstract}
$k$-defensive domination, a variant of the classical domination problem on graphs, 
seeks a minimum cardinality vertex set providing a surjective
defense against any attack on vertices of cardinality bounded by a parameter $k$. 
The problem has been shown to be $\npc$ for fixed $k$;
if $k$ is part of the input, the problem is not even in $\np$. 
We present efficient algorithms solving this problem on proper interval graphs with $k$ part of the input.  
The algorithms take advantage of the linear orderings of the end points of the intervals associated with vertices to realize a greedy approach to solution.  
The first algorithm is based on the interval model and has complexity $\bigoh(n \cdot k)$ for a graph on
$n$ vertices. 
The second one is an improvement of the first and employs bubble representations of proper interval graph to realize an improved complexity of $\bigoh(n+ \abs{\bb} \cdot \log k)$ for a graph represented by $\abs{\bb}$ bubbles.
\end{abstract}

\begin{keyword}
Defensive domination, Proper interval graph
\end{keyword}
\end{frontmatter}

\section{Introduction}\label{sec:intro}
%Copied from EFP
In the classical domination problem, we determine a minimum cardinality set of vertices such that every other vertex has a neighbor in this set. 
Various applications can be modeled as variations of the classical domination problem. 
Due to the wide range of applications and interesting theoretical properties, domination problems have been extensively studied in the literature (see e.g., \cite{HH1,HH}). 

One application area for domination problems is  network security. 
Indeed, a vertex in a dominating set is often considered to be a security guard placed at that vertex having the ability to protect itself and its neighbors (i.e., vertices having a direct link to it) in the network. 
The defensive domination problem, introduced in \cite{FP}, is a variation that deals with the capability of a network to be self-organized when several vertices are under attack simultaneously. 
In the $k$-defensive domination problem, any subset of vertices of cardinality up to a given number $k$ is seen as a potential attack against which a defensive domination set must defend. 

Despite its close links with other domination problems and network security problems, the literature on the defensive domination problem is limited. 
In \cite{FP}, the authors introduce the defensive domination problem along with its fundamental properties, and a dynamic programming algorithm to solve the problem in trees is presented. 
In a related work on defensive domination \cite{kra}, the authors use a slightly different version of the defensive domination problem as a tool to solve the cops-and-robber (or pursuit-evasion) game in interval graphs. 
More recently, in \cite{EFP}, the authors consider the computational complexity of the defensive domination problem. They first show that the defensive domination problem is $\npc$, even for split graphs, if the maximum cardinality $k$ of an attack is fixed. 
They also show that if $k$ is not fixed, the problem is not even in $\np$. Indeed, if $k$ is not fixed, there are exponentially many potential attacks (subsets of $k$ vertices) to be considered. 
Subsequently, the authors describe optimal $k$-defensive dominating sets for cycles and paths (solving the problem in linear time) and develop linear time algorithms for co-chain graphs and threshold graphs, even when $k$ is not fixed. 
In \cite{EFP}, the authors point out the complexity of the defensive domination problem in proper interval graphs as an open question. 
In the present paper, we show that the defensive domination problem can be solved in polynomial time in proper interval graphs, generalizing the result for co-chain graphs. 

In Section \ref{sec:prelim}, we provide formal definitions for the $k$-defensive domination problem and proper interval graphs along with some general observations about defensive dominating sets. Section \ref{sec:properIntervalGraph} describes a greedy algorithm that finds a minimum $k$-defensive dominating set in proper interval graphs in time $\bigoh(n \cdot k)$. Our approach is similar to the ones used in \cite{EFP}. Namely, we take advantage of the  structure of proper interval graphs to show that if a set of vertices can defend against a selected set of $k$-attacks then it can defend against all $k$-attacks, ensuring that it is a $k$-defensive dominating set. In Section \ref{sec:bubblesModel}, we improve the time complexity of our greedy algorithm to $\bigoh(n+ \abs{\bb} \cdot \log k)$ where $\abs{\bb}$ stands for the number of bubbles in a compact bubble representation of proper interval graphs.

\section{Preliminaries}\label{sec:prelim}
\runningtitle{Graph notations and terms}
Given a simple graph (no loops or parallel edges) $G=(V,E)$, $uv$ denotes an edge between two vertices $u,v$ of $G$. 
We denote by $N(v)$ the set of neighbors of $v$ in $G$, and by $N[v]$ the \emph{closed neighborhood} of $G$, i.e. $N(v) \cup \set{v}$.
We extend this notion to a set of vertices in the standard way: for $A \subseteq V$, $N[A]=\bigcup_{v \in A} N[v]$.
Two adjacent vertices $u,v$ of $G$ are \emph{twins} if $N(u) \setminus \set{v} = N(v) \setminus \set{u}$. For a graph $G$ and $U \subseteq V(G)$, we denote by $G[U]$ the subgraph of $G$ induced by $U$.

The square $G^2$ of a graph $G$ is a graph over the same vertex set (i.e., $V$),
such that two vertices $u$ and $v$ are adjacent in $G^2$ if and only if their distance is at most 2 in $G$. We denote by $G^2[A]$ the subgraph of $G^2$ induced by $A\subseteq V(G)$.

A graph $G$ is an \emph{interval graph} if its vertices can be represented by closed intervals on the real line, such that two vertices are adjacent in $G$ if and only if the corresponding intervals are intersecting. 
An interval graph is \emph{proper}  if it has an interval representation such that no interval properly contains another. 
If all intervals of its interval representation have the same length, then the graph is called a \emph{unit} interval graph. 
It is known that the class of proper interval graphs is equivalent to the class of unit interval graphs \cite{Bogart199921}.
It is important to note that in a proper interval representation of a proper interval graph, the left-to-right order of the left endpoints of intervals is the same as the order of their right endpoints (See Figure \ref{fig:ProperIntervalRep1}).
\begin{figure}
\begin{center}
\scalebox{1}{\commentfig{
\includegraphics[width=11cm]{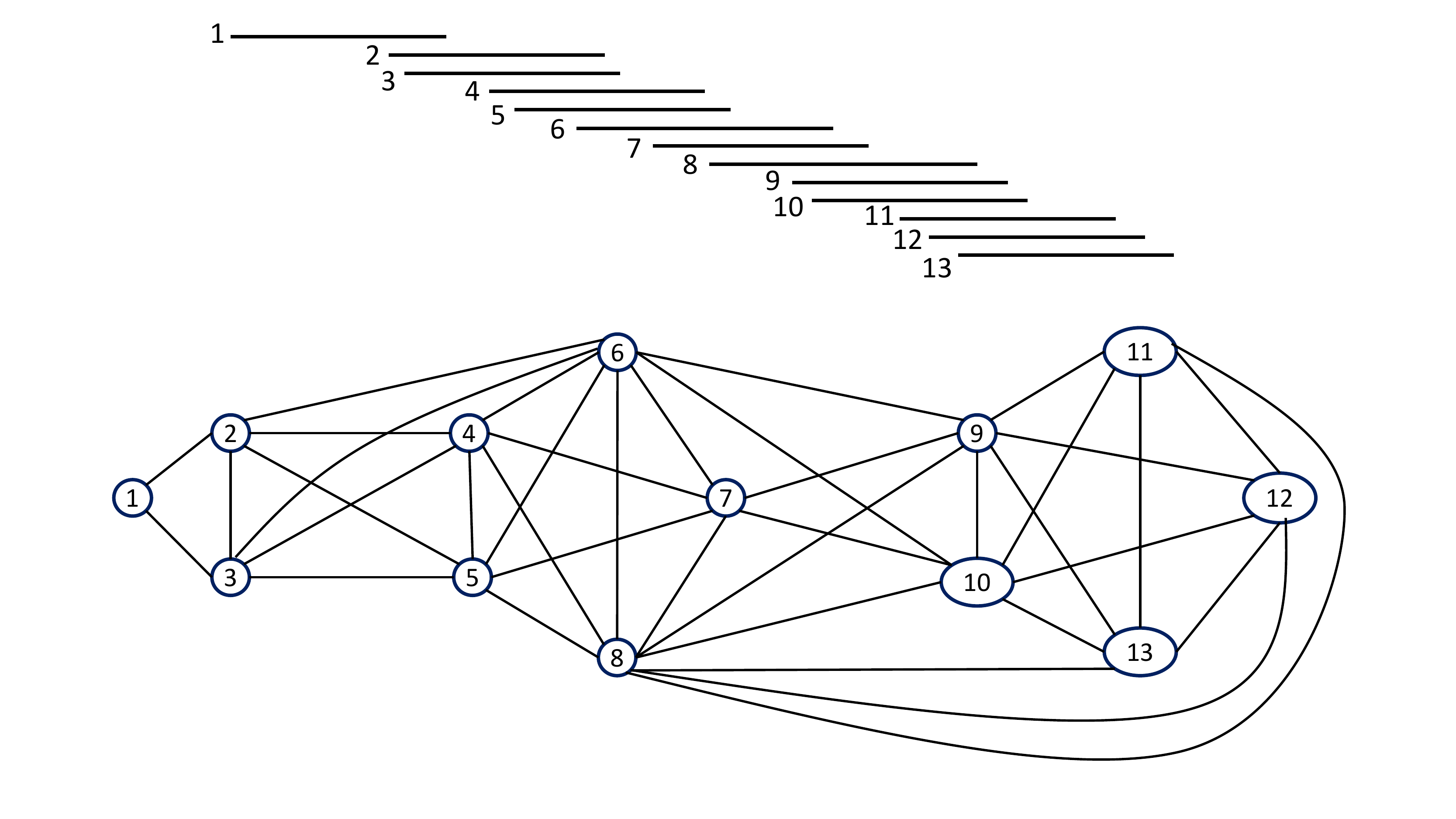}}}
\caption{A proper interval representation and the corresponding proper interval graph.}\label{fig:ProperIntervalRep1}
\end{center}
\end{figure}

\runningtitle{$k$-Defensive Domination}
An \emph{attack} on a graph is a subset of its vertices, 
and an element (i.e., a vertex) of an attack is termed an \emph{attacker}.
A $k$-\emph{attack} is an attack with at most $k$ vertices.
Given a graph $G$, an attack $A$ on $G$, and a set $D \subseteq V(G)$ of \emph{defenders},
a (possibly partial) surjection $f: D \rightarrow A$ is termed a \emph{defense} of $D$ against $A$ if, for $d \in D$, $f(d) \in N[d]$ whenever $f(d)$ is defined.
A set $D$ of vertices \emph{defends} against $A$ if there exists a defense of $D$ against $A$. 
A set $D$ of vertices of $G$ is \emph{$k$-defensive} if it defends against any $k$-attack on $G$.
Clearly, a $k$-defensive set always exists, since the set $V$ defends against any attack.

In this work we consider the following optimization problem.
\optprobl{\kdefprobl}
{A graph $G$, and a positive integer $k$.}
{A $k$-defensive set $D$ of vertices of $G$.}
{Minimize $\abs{D}$.}
We assume that $0<k<\abs{V}$, since otherwise the problem is trivial.
The following two observations are valid for general graphs. 

\begin{observation}\label{obs:ConnectedInGTwo}
A set $D$ of vertices is $k$-defensive if and only if it defends against every $k$-attack $A$ such that $G^2[A]$ is connected.
\end{observation}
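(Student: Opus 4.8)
The plan is to prove the two implications separately. The forward direction is immediate: if $D$ defends against \emph{every} $k$-attack, then in particular it defends against those $k$-attacks $A$ for which $G^2[A]$ is connected, so nothing needs to be shown there.

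For the converse, assume $D$ defends against every $k$-attack $A$ with $G^2[A]$ connected, and let $A$ be an arbitrary $k$-attack. First I would let $A_1,\dots,A_m$ be the vertex sets of the connected components of $G^2[A]$. Each $A_i$ satisfies $\abs{A_i}\le\abs{A}\le k$, and $G^2[A_i]$ is connected, so by the hypothesis there is a defense $f_i\colon D\to A_i$ of $D$ against $A_i$. The heart of the argument is to glue $f_1,\dots,f_m$ into a single defense of $D$ against $A$, and for that it suffices to show that no defender is used by two different $f_i$'s, i.e.\ that the sets $D_i\defined\{d\in D: f_i(d)\text{ is defined}\}$ are pairwise disjoint. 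Suppose $d\in D_i\cap D_j$ with $i\ne j$. Then $f_i(d)\in N[d]$ and $f_j(d)\in N[d]$, and since $A_i\cap A_j=\emptyset$ these are two \emph{distinct} vertices of $A$, both lying in $N[d]$; hence they are at distance at most $2$ in $G$, so they are adjacent in $G^2$. This contradicts $A_i$ and $A_j$ being the vertex sets of distinct connected components of $G^2[A]$. Therefore the $D_i$ are disjoint, and $f(d)\defined f_i(d)$ for $d\in D_i$ (undefined otherwise) is a well-defined partial map $f\colon D\to A$ with $f(d)\in N[d]$ whenever defined; since each $f_i$ is onto $A_i$ and $A=\bigcup_i A_i$, the map $f$ is onto $A$, so $f$ is a defense of $D$ against $A$. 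As $A$ was an arbitrary $k$-attack, $D$ is $k$-defensive.

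The only real obstacle is the disjointness step, and within it the corner cases where $f_i(d)=d$ or $f_j(d)=d$; but these are handled by the same reasoning, since $d\in N[d]$, so any element of $N[d]$ is at distance at most $2$ in $G$ from any other element of $N[d]$. Everything else is bookkeeping about connected components and surjectivity.
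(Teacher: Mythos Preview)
Your proof is correct and follows essentially the same approach as the paper: decompose an arbitrary $k$-attack into the connected components $A_1,\dots,A_m$ of $G^2[A]$ and exploit that distinct components have disjoint closed neighborhoods (your disjointness argument for the $D_i$ is exactly this fact). The only cosmetic difference is that the paper argues by contradiction---if $D$ fails on $A'$ then, since the $N[A'_i]$ are pairwise disjoint, $D$ already fails on some single component $A'_i$---whereas you construct the global defense directly by gluing the $f_i$; the underlying observation is identical.
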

\begin{proof}
Necessity being obvious, we prove sufficiency. 
Let $D$ be a set of vertices that defends against all attacks $A$ such that $G^2[A]$ is connected.
Suppose that $D$ does not defend against a $k$-attack $A'$. 
Then $G^2[A']$ consists of connected components $G^2[A'_1], \ldots, G^2[A'_\ell], \ell > 1$.
Therefore, $N[A'_i] \cap N[A'_j] = \emptyset$ for every distinct $i,j \in [1..\ell]$.
Let $D_i \defined D \cap N[A'_i]$ for every $i \in [\ell]$.
Clearly, $\set{D_1, \ldots D_\ell}$ is a near partition of $D$, possibly including empty sets.
By our assumption, there exists $i \in [\ell]$ such that $D_i$ does not defend against $A'_i$.
Then $D$ does not defend against $A'_i$, and $G^2[A'_i]$ is connected, contradicting our assumption.
\end{proof}

We can extend the above observation using Hall's theorem (about systems of distinct representatives) as follows. 
\begin{observation}\label{obs:Hall}
$D$ is $k$-defensive if and only if $\abs{N[A] \cap D} \geq \abs{A}$ for every set $A$ with at most $k$ vertices such that $G^2[A]$ is connected. 
\end{observation}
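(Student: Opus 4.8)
The plan is to combine Observation~\ref{obs:ConnectedInGTwo} with Hall's theorem on systems of distinct representatives. A defense of $D$ against an attack $A$ is, by definition, a (partial) surjection $f\colon D\to A$ with $f(d)\in N[d]$ whenever defined; equivalently, every attacker $a\in A$ must be assigned some distinct defender $d\in D$ with $a\in N[d]$, i.e.\ $d\in N[a]$. So a defense of $D$ against $A$ exists precisely when the bipartite graph $H$ on parts $A$ and $D$, with $a$ joined to $d$ iff $d\in N[a]\cap D$, has a matching saturating $A$. By Hall's theorem this happens if and only if every subset $A'\subseteq A$ satisfies $\abs{N_H(A')}\ge\abs{A'}$, where $N_H(A')=\bigcup_{a\in A'} (N[a]\cap D) = N[A']\cap D$.

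First I would prove necessity (the ``only if'' direction). Suppose $D$ is $k$-defensive and let $A$ be any set with $\abs{A}\le k$ and $G^2[A]$ connected; in particular $A$ is a $k$-attack, so $D$ defends against it, and by the Hall characterization above (applied with $A'=A$) we get $\abs{N[A]\cap D}\ge\abs{A}$. That is immediate.

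For sufficiency, assume $\abs{N[A]\cap D}\ge\abs{A}$ for every $A$ with at most $k$ vertices such that $G^2[A]$ is connected. By Observation~\ref{obs:ConnectedInGTwo} it suffices to show $D$ defends against every $k$-attack $A$ with $G^2[A]$ connected. Fix such an $A$ and consider the bipartite graph $H$ as above; I want a matching saturating $A$, so by Hall's theorem I must verify $\abs{N[A']\cap D}\ge\abs{A'}$ for \emph{every} $A'\subseteq A$. The point is that such an $A'$ need not itself induce a connected subgraph of $G^2$, so the hypothesis does not apply directly to $A'$. The fix is to pass to the connected components of $G^2[A']$: write $A'=A'_1\cup\dots\cup A'_m$ with each $G^2[A'_i]$ connected. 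Each $A'_i$ has at most $k$ vertices (being a subset of $A$), so the hypothesis gives $\abs{N[A'_i]\cap D}\ge\abs{A'_i}$. Moreover, as in the proof of Observation~\ref{obs:ConnectedInGTwo}, vertices in different components of $G^2[A']$ are at distance more than $2$ in $G$, hence the sets $N[A'_i]$ are pairwise disjoint. Therefore $\abs{N[A']\cap D}=\sum_{i=1}^m\abs{N[A'_i]\cap D}\ge\sum_{i=1}^m\abs{A'_i}=\abs{A'}$, so Hall's condition holds and $D$ defends against $A$.

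The main obstacle is exactly the subtlety just described: the statement quantifies only over attacks $A$ with $G^2[A]$ connected, but Hall's condition must be checked for \emph{all} subsets of the attacker set, and an arbitrary subset of a $G^2$-connected set is not $G^2$-connected. Resolving this requires decomposing such a subset into $G^2$-components and invoking the disjointness of the closed neighborhoods of distinct components --- the same structural fact that powers Observation~\ref{obs:ConnectedInGTwo}. Once that decomposition step is in place, the rest is a routine application of Hall's marriage theorem.
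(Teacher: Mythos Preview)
Your argument is correct and is precisely the intended one: the paper does not spell out a proof of this observation at all, merely remarking that it follows from Observation~\ref{obs:ConnectedInGTwo} ``using Hall's theorem (about systems of distinct representatives).'' You have supplied exactly the missing details, including the one genuine subtlety---that Hall's condition must be verified for all subsets $A'\subseteq A$, handled by decomposing $A'$ into $G^2$-components and using the pairwise disjointness of their closed neighborhoods, just as in the proof of Observation~\ref{obs:ConnectedInGTwo}.
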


\section{A Greedy Algorithm}\label{sec:properIntervalGraph}

Our greedy algorithm is based on the following idea. We first show that defending against a few special attacks is sufficient for a set $D$ to be a $k$-defensive dominating set. These special $k$-attacks can be obtained from the interval representation of a proper interval graph and are the set of consecutive vertex intervals of size $k$. Note that this is well-defined since the intervals of a proper interval graphs are uniquely ordered by the left endpoints (or equivalently the right end-points). Our greedy algorithm considers these $k$-attacks one by one and adds defenders only when necessary to extend an optimal defense.  

Given two natural numbers $i \leq j$ we denote by $[i..j]$ the set of consecutive integers from $i$ to $j$ including $i$ and $j$.
In this section, unless stated otherwise, $G=(V,E)$ is a connected proper interval graph with $V=[1..n]$, where each vertex $i \in V$ is represented by an interval $I_i$ on the real line, 
where no interval $I_i$ is properly included in another, 
and $ij \in E$ if and only if $I_i \cap I_j \neq \emptyset$ for every two distinct $i,j \in [1..n]$.
The vertices are numbered according to the left-to-right order of the left endpoints of the corresponding 
intervals (which is equal to the order of their right endpoints since $G$ is a proper interval graph).

For a subset $A \subseteq V$ of vertices of $G$, denote by $\uu(A)=\bigcup_{i \in A} I_i$ the union of the intervals representing them.  Note that when two intervals overlap their union is one interval with their combined extents.  Let intervals $I'_1, \ldots, I'_c$ (numbered in their left-to-right order on the real line) be the collection of disjoint intervals of $\uu(A)$.
The \emph{span} of $A$ is the interval on the real line between the left endpoint of $I'_1$ and the right endpoint of $I'_c$.  We define $range(A)$ as the set of vertices of $G$ 
whose representations are included in the span of $A$. 
 Let $\min(A)$ be the leftmost vertex in $I'_1$ and let $\max(A)$ be  the rightmost vertex in $I'_c$.
Thus $range(A) = [\min(A) .. \max(A)]$.
We call a subset $A \subseteq V$ of vertices of $G$ \emph{consecutive} if $A=[i..j]$ for two integers $i,j \in [1..n]$,
i.e., if $A = range(A)$.

We say that $A$ is \emph{bridged} if, 
for every two consecutive intervals $I'_i$ and $I'_{i+1}$ of $\uu(A)$, there is a vertex  $I_j$, termed a \emph{bridging vertex} of $I'_i$, such that $I_j$ intersects both $I'_i$ and $I'_{i+1}$.
Note that in a proper interval graph the bridging vertices of distinct pairs of intervals of $\uu(A)$ are distinct.

We start with the following observation that characterizes the subsets $A$ such that $G^2[A]$ is connected.
\begin{observation}\label{obs:GSquareUConnected}
Let $G$ be a proper interval graph and $A \subseteq V(G)$ a subset of its vertices.
Then $G^2[A]$ is connected if and only if $A$ is bridged. 
\end{observation}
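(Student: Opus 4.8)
The plan is to work with the maximal connected pieces $I'_1,\dots,I'_c$ of $\uu(A)$ and the partition of $A$ they induce. Since every interval $I_a$ with $a\in A$ is contained in $\uu(A)$ and is itself connected, it lies inside exactly one piece; writing $A_j=\set{a\in A : I_a\subseteq I'_j}$, we get $A=A_1\cup\dots\cup A_c$ (a disjoint union, with each $A_j$ nonempty) and $\bigcup_{a\in A_j} I_a = I'_j$. I would first record the standard fact that a finite family of intervals whose union is a single interval induces a connected interval subgraph: otherwise that union would split into two disjoint nonempty finite unions of closed intervals, contradicting connectedness. Applying this to each $A_j$ shows $G[A_j]$, and hence $G^2[A_j]$, is connected.

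For the forward direction I would show that if $A$ is bridged then consecutive pieces become linked in $G^2[A]$. Fix $j$ and a bridging vertex $w$ of $I'_j$. Let $r$ be the largest-indexed vertex of $A_j$ and $\ell$ the smallest-indexed vertex of $A_{j+1}$; because in a proper interval representation the order of the left endpoints agrees with that of the right endpoints, the right endpoint of $I_r$ equals the right endpoint $R_j$ of $I'_j$ and the left endpoint of $I_\ell$ equals the left endpoint $L_{j+1}$ of $I'_{j+1}$, with $R_j<L_{j+1}$. Since $I_w$ meets $I'_j$ it contains a point $\le R_j$, and since it meets $I'_{j+1}$ it contains a point $\ge L_{j+1}$; being an interval, $I_w$ therefore contains both $R_j$ and $L_{j+1}$, so $I_w$ intersects $I_r$ and $I_\ell$. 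As $r,w,\ell$ are pairwise distinct (each lies in, or reaches, a different piece), this gives $d_G(r,\ell)\le 2$, i.e. $r\ell\in E(G^2[A])$. Together with the connectivity of each $G^2[A_j]$, this makes $G^2[A]$ connected.

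For the converse I would argue the contrapositive: if $A$ is not bridged then $c\ge 2$ and some piece $I'_i$ has no bridging vertex. Set $X=A_1\cup\dots\cup A_i$ and $Y=A_{i+1}\cup\dots\cup A_c$, both nonempty. Suppose some edge of $G^2[A]$ joins $u\in X$ to $v\in Y$. It cannot be that $d_G(u,v)=1$, since then $I_u\cap I_v\neq\emptyset$ would force $u$ and $v$ into the same piece; so $d_G(u,v)=2$ via some $w$ with $I_w$ meeting both $I_u$ and $I_v$. As $I_u\subseteq I'_a$ with $a\le i$, $I_w$ has a point $\le R_i$, and as $I_v\subseteq I'_b$ with $b\ge i+1$, $I_w$ has a point $\ge L_{i+1}>R_i$; hence $I_w$ contains both $R_i$ and $L_{i+1}$, so $w$ is a bridging vertex of $I'_i$ — a contradiction. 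Thus no edge of $G^2[A]$ joins $X$ to $Y$, so $G^2[A]$ is disconnected.

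The only non-routine step is the geometric one that turns the set-level statement ``$I_w$ meets $I'_i$ and $I'_{i+1}$'' into the vertex-level statement ``$I_w$ meets $I_r$ and $I_\ell$'' (and, in the converse, its recovery of a bridging vertex from a distance-$2$ witness). This is exactly where the coincidence of the left- and right-endpoint orderings in a proper interval representation is essential; once it is in place, both implications follow from the connected-union fact and a straightforward assembly over the pieces.

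\qed
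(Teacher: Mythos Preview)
Your argument is correct. The paper states this result as an observation and gives no proof, so there is nothing to compare your approach against; you have supplied precisely the justification the authors leave implicit, and your use of the proper-interval endpoint ordering to pass from ``$I_w$ meets $I'_j$ and $I'_{j+1}$'' to ``$I_w$ meets $I_r$ and $I_\ell$'' is the natural way to make the observation rigorous. One minor remark: in the forward direction the pairwise distinctness of $r,w,\ell$ is not really needed for $d_G(r,\ell)\le 2$ (and in any case $w\notin A$ follows automatically, since $I_w$ spans the gap between $I'_j$ and $I'_{j+1}$ and so cannot lie inside any single $I'_m$).
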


\begin{lemma}\label{lem:rangeNeighbourhood}
Let $G$ be a proper interval graph and $A \subseteq V(G)$ be a subset of its vertices.
If $G^2[A]$ is connected then $N[A] = N[range(A)]$.
\end{lemma}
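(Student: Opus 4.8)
The plan is to prove the two inclusions $N[A]\subseteq N[range(A)]$ and $N[range(A)]\subseteq N[A]$ separately; the first is immediate and the real work is in the second. Since $A\subseteq range(A)$ always holds (by the definition of $range$), we get $N[A]\subseteq N[range(A)]$ for free, using only $N[\cdot]$ monotonicity with respect to set inclusion. This does not even need the hypothesis that $G^2[A]$ is connected.

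For the reverse inclusion, I would fix a vertex $w\in N[range(A)]$ and find a vertex of $A$ adjacent (or equal) to $w$. By definition of $range$, there is some $v\in range(A)$ with $w\in N[v]$, i.e.\ $I_w\cap I_v\neq\emptyset$, and $I_v$ lies inside the span of $A$, which runs from the left endpoint of $\min(A)$ to the right endpoint of $\max(A)$. If $v\in A$ we are done, so assume $v\notin A$. Then $I_v$ sits strictly between two consecutive intervals $I'_i$ and $I'_{i+1}$ of $\uu(A)$ (it cannot meet any interval of $A$). Now invoke Observation~\ref{obs:GSquareUConnected}: since $G^2[A]$ is connected, $A$ is bridged, so there is a bridging vertex $b\in A$ whose interval $I_b$ intersects both $I'_i$ and $I'_{i+1}$. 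The key geometric claim is that $I_b$ must then contain $I_v$ in the interval order sense, or at least overlap $I_v$: because $I_v$ lies in the gap between $I'_i$ and $I'_{i+1}$ while $I_b$ spans that gap, and because the representation is \emph{proper} (no interval properly contains another), the left-to-right orders of left and right endpoints coincide. Ordering endpoints, $I_b$'s left endpoint is at most the right endpoint of $I'_i$, hence to the left of $I_v$, and $I_b$'s right endpoint is at least the left endpoint of $I'_{i+1}$, hence to the right of $I_v$; therefore $I_b\cap I_v\neq\emptyset$, so $b\in N[v]$. Then $w\in N[v]\subseteq N[N[b]]$.

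The one gap in that argument is that $w\in N[b]$ requires distance exactly~$\le 1$ from $b$, not~$2$; knowing $w\in N[v]$ and $v\in N[b]$ only gives distance $\le 2$. So the cleaner route is to argue directly about $w$'s interval rather than $v$'s: take $w\in N[range(A)]$, so $I_w$ meets the span of $A$. If $I_w$ meets some interval of $A$ we are done. Otherwise $I_w$ lies in a gap between consecutive $I'_i,I'_{i+1}$, and the same endpoint-ordering argument applied to $I_w$ and the bridging vertex $b$ of that gap shows $I_w\cap I_b\neq\emptyset$, i.e.\ $w\in N[b]\subseteq N[A]$. I expect the main obstacle to be making the endpoint-ordering/gap argument fully rigorous — in particular justifying that an interval meeting the span of $A$ but no interval of $A$ must lie entirely within a single gap of $\uu(A)$, and that a bridging vertex of that gap necessarily overlaps it — which is where properness (equality of the two endpoint orders) is used essentially, via the remark before Observation~\ref{obs:GSquareUConnected} that bridging vertices of distinct gaps are distinct and each bridging interval covers its whole gap.
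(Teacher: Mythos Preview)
There is a genuine error in your plan: the bridging vertex $b$ is \emph{not} an element of $A$. Indeed, if $b\in A$ then $I_b\subseteq\uu(A)$, and since $I_b$ intersects both $I'_i$ and $I'_{i+1}$ these two would lie in the same connected component of $\uu(A)$, contradicting that they are distinct maximal intervals. So your conclusion ``$w\in N[b]\subseteq N[A]$'' (and likewise ``$b\in N[v]$ with $b\in A$'' in the first attempt) does not follow; establishing $I_w\cap I_b\neq\emptyset$ gives you nothing about $N[A]$ directly.

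The paper's proof takes the same set-up---assume $i\in N[range(A)]\setminus N[A]$, deduce that $I_i$ lies strictly in a gap between two consecutive components $I'_j,I'_{j+1}$ of $\uu(A)$, and invoke the bridging vertex $b$---but then finishes by contradiction rather than by exhibiting a neighbour of $i$ in $A$. Since $I_b$ reaches to the left of the gap (it meets $I'_j$) and to the right of the gap (it meets $I'_{j+1}$), while $I_i$ is entirely inside the gap, one gets that $I_b$ \emph{properly contains} $I_i$. This contradicts the properness of the interval representation, so no such $i$ exists and $N[range(A)]\subseteq N[A]$. Your endpoint-ordering intuition is exactly what drives this step; you just need to push it one notch further (from ``$I_b$ overlaps $I_i$'' to ``$I_b$ properly contains $I_i$'') and use it as a contradiction rather than as a way to locate a neighbour in $A$.
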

\begin{proof}
Clearly, $N[A] \subseteq N[range(A)]$ since $A \subseteq range(A)$.
It remains to show that $N[range(A)] \subseteq N[A]$.
Since $G^2[A]$ is connected $A$ is a bridged set by Observation \ref{obs:GSquareUConnected}.
Assume that there exists a vertex $i \in N[range(A)] \setminus N[A]$.
Then $I_i$ is strictly between two consecutive intervals of $\uu(A)$
and therefore properly included in an interval representing a bridging vertex.
This would contradict the premise of the Lemma that $G$ is a proper interval graph.
\end{proof}

After removing the leftmost interval of a bridged set $A$, it remains bridged;
this also holds for the addition of a vertex $v \in range(A)$ to $A$.
Consequently, we have the following observation.
\begin{observation}\label{obs:GSquareRemainsConnected}
Let $G$ be a proper interval graph and $A \subseteq V(G)$ be a subset of its vertices with $u$ being the leftmost vertex of $A$ and $v$ any vertex in $range(A)$.
If $G^2[A]$ is connected then $G^2[A - u + v]$ is connected.
\end{observation}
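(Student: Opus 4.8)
The plan is to use Observation~\ref{obs:GSquareUConnected} to turn the statement into a combinatorial claim about \emph{bridged} sets: since $G^2[X]$ is connected exactly when $X$ is bridged, it suffices to establish the two closure properties announced just before the observation, namely (i) deleting the leftmost vertex of a bridged set leaves it bridged, and (ii) adding a vertex $w\in range(X)$ to a bridged set $X$ leaves it bridged. Granting these, the observation follows by a one-line case split. If $v=u$ then $A-u+v=A$ and there is nothing to prove. Otherwise, because $u=\min(A)$ forces $range(A)=[u..\max(A)]$ and hence $v>u$, the vertex $u$ is still the leftmost vertex of $A+v$, we have $A-u+v=(A+v)-u$, and applying (ii) to $A$ with $w=v$ and then (i) to $A+v$ shows that $A-u+v$ is bridged, i.e.\ that $G^2[A-u+v]$ is connected.

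For (i), write $\uu(X)=I'_1\cup\cdots\cup I'_c$ with $u=\min(X)$, so $I_u\subseteq I'_1$. List the vertices of $X$ whose intervals lie in $I'_1$ in increasing order $u=a_1<\cdots<a_m$; since $G$ is a proper interval graph their right endpoints increase together with their left endpoints, so the union of $I_{a_1},\dots,I_{a_m}$ is a single interval precisely when the left endpoint of each $I_{a_{j+1}}$ is at most the right endpoint of $I_{a_j}$, a condition that passes from the full list to every suffix. Hence $\uu(X-u)$ is again $I'_2,\dots,I'_c$ together with one (possibly empty) interval $I''_1\subseteq I'_1$ which, when nonempty, shares the right endpoint of $I'_1$. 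Every bridging vertex of $X$ is a vertex of $G$ (not necessarily of $X$) and is therefore still available: for $i\ge 2$ the bridging vertex of $(I'_i,I'_{i+1})$ still meets both components, and the bridging vertex of $(I'_1,I'_2)$ contains the whole gap between them, in particular the right endpoint of $I'_1$, so it still meets $I''_1$ and $I'_2$. (If $I''_1=\emptyset$ then $c=1$, since otherwise that last bridging vertex would properly contain $I_u$, violating properness; then $X-u$ lies in a single interval.) Thus $X-u$ is bridged.

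For (ii) we may assume $w\notin X$, as otherwise $X+w=X$. Since $w\in range(X)=[\min(X)..\max(X)]$, the interval $I_w$ lies inside the span of $X$; moreover $I_w$ cannot lie strictly between two consecutive components $I'_i,I'_{i+1}$ of $\uu(X)$ while meeting neither, because the bridging vertex of that pair contains the entire gap between them and would then properly contain $I_w$. Hence $I_w$ either is contained in some component or meets one or two consecutive components, so adding $I_w$ to $\uu(X)$ only merges and enlarges components and creates no new gap; any gap surviving in $\uu(X+w)$ is a gap $(I'_i,I'_{i+1})$ of $\uu(X)$ whose bridging vertex, which met $I'_i$ and $I'_{i+1}$, still meets the only-enlarged components flanking it. So $X+w$ is bridged.

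The routine but essential point behind both (i) and (ii) is the defining feature of proper interval representations already highlighted in the preliminaries: the left endpoints and the right endpoints occur in the same left-to-right order. This is exactly what makes ``connected union'' a plain left-to-right inequality in (i) and what forbids a bridging vertex from properly containing the newly added interval in (ii); I expect that assembling these elementary observations into the two closure claims, and keeping track of the empty/degenerate components, is the only real work.
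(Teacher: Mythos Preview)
Your approach is exactly the paper's: reduce via Observation~\ref{obs:GSquareUConnected} to the two closure properties stated just before the observation (removing the leftmost vertex preserves bridgedness; adding a vertex of $range(A)$ preserves bridgedness), and combine them. The paper simply asserts those two properties without argument; you supply the details, and your order of application (add $v$ first, then delete $u$) is the right one, since after deleting $u$ first one may have $v\notin range(A-u)$.

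One small slip to fix in your part~(i): the parenthetical claim that $I''_1=\emptyset$ forces $c=1$ is false. If $u$ is the only vertex of $X$ in $I'_1$ and $c>1$, the bridging vertex $b$ of $(I'_1,I'_2)=(I_u,I'_2)$ has its right endpoint strictly to the right of that of $I_u$, hence by properness its left endpoint is also strictly to the right of that of $I_u$; so $I_b$ overlaps $I_u$ on the right but does \emph{not} properly contain it. The fix is immediate: when $I''_1=\emptyset$ one simply has $\uu(X-u)=I'_2\cup\cdots\cup I'_c$, and the bridging vertices for the pairs $(I'_i,I'_{i+1})$ with $i\ge 2$ already witness that $X-u$ is bridged. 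With this correction your argument goes through.
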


This allows us to define a small set of $k$-attacks such that defending against those attacks suffices to defend against any $k$-attack.

\begin{lemma}\label{lem:consecutive}
Let $G$ be a proper interval graph, and $D$ be a set of vertices of $G$. 
Then $D$ is $k$-defensive if and only if $D$ defends against every consecutive $k$-attack in $G$.
\end{lemma}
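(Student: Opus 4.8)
The plan is to prove the nontrivial direction: assuming $D$ defends against every consecutive $k$-attack, show $D$ is $k$-defensive. By Observation \ref{obs:ConnectedInGTwo} it suffices to show that $D$ defends against every $k$-attack $A$ with $G^2[A]$ connected, and by Observation \ref{obs:Hall} this amounts to checking $\abs{N[A] \cap D} \geq \abs{A}$ for every such $A$. So fix a $k$-attack $A$ with $G^2[A]$ connected and let $a = \abs{A} \le k$. I would first replace $A$ by a more convenient attack of the same size contained in the same range.

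The key idea is that for an attack $A$ with $G^2[A]$ connected, Lemma \ref{lem:rangeNeighbourhood} gives $N[A] = N[range(A)]$, so the neighborhood depends only on $range(A) = [\min(A)..\max(A)]$. Starting from the leftmost vertex $\min(A)$, I would ``slide'' $A$ rightward inside its range: repeatedly applying Observation \ref{obs:GSquareRemainsConnected}, remove the leftmost vertex $u$ of the current attack and add any not-yet-used vertex $v$ of its range to the right, keeping the attack connected and of size $a$ throughout. Iterating, one reaches the consecutive attack $A^* = [\min(A)..\min(A)+a-1]$ of size $a$; at each intermediate step the range only shrinks or stays the same, so $range(A^*) \subseteq range(A)$, hence $N[A^*] = N[range(A^*)] \subseteq N[range(A)] = N[A]$. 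The point is that $A^*$ is a consecutive $k$-attack (since $a \le k$), so by hypothesis $D$ defends against it, and by Observation \ref{obs:Hall} applied to $A^*$ we get $\abs{N[A^*] \cap D} \geq \abs{A^*} = a = \abs{A}$. Combining, $\abs{N[A] \cap D} \geq \abs{N[A^*] \cap D} \geq \abs{A}$, which by Observation \ref{obs:Hall} in the other direction shows $D$ defends against $A$.

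I expect the main obstacle to be making the ``sliding'' argument airtight: one must verify that the process of deleting the leftmost vertex and inserting an unused range-vertex on the right actually terminates at the fully consecutive attack $[\min(A)..\min(A)+a-1]$, and that the range is non-increasing at every step so the neighborhood inclusion is preserved. A clean way is induction on the number of ``gaps'' (missing vertices) in $A$ relative to $[\min(A)..\min(A)+a-1]$, or equivalently on $\max(A) - \min(A) - a + 1$: if $A$ is already consecutive we are done; otherwise $\max(A) > \min(A)+a-1$, so there is an unused vertex $v \in [\min(A)+1..\max(A)-1] \subseteq range(A)$, and $A' = A - \min(A) + v$ has $G^2[A']$ connected by Observation \ref{obs:GSquareRemainsConnected}, the same size, $range(A') \subseteq range(A)$, and strictly smaller value of the potential, so induction applies. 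One should also note the base cases where $A$ has a single vertex or $D$ could be empty are consistent with the stated assumption $0 < k < \abs{V}$, but these cause no real difficulty.
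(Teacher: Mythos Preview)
Your argument is essentially the same as the paper's: both use Observation~\ref{obs:Hall} to reduce to a neighborhood-counting condition, and both perform the move ``delete $\min(A)$, insert some $\ell\in range(A)\setminus A$'' (justified by Observation~\ref{obs:GSquareRemainsConnected} and Lemma~\ref{lem:rangeNeighbourhood}) to shrink the range while preserving $\abs{A}$ and only shrinking $N[A]$. The paper packages this as a minimal-counterexample contradiction on $\abs{range(A)}$; you package it as a direct iterative/inductive reduction to a consecutive attack. These are the same proof in different clothing.

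One slip to fix: your terminal attack cannot be $A^*=[\min(A)..\min(A)+a-1]$. The very first step removes $\min(A)$ from the attack, and since every inserted vertex $v$ lies strictly between $\min(A)$ and $\max(A)$, the value of $\max$ never changes while $\min$ strictly increases. Hence the process terminates at $A^*=[\max(A)-a+1..\max(A)]$ (more generally, at some consecutive block inside $range(A)$), not at the left-anchored block you wrote. This does not affect the validity of your argument, because all you actually use about $A^*$ is that it is consecutive, has size $a\le k$, and satisfies $range(A^*)\subseteq range(A)$ so that $N[A^*]\subseteq N[A]$; just correct the stated identity of $A^*$.
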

\begin{proof}
Necessity being obvious, we prove sufficiency.
Assume that $D$ defends against any consecutive $k$-attack in $G$,
but there exists a $k$-attack in $G$ that $D$ does not defend against.
By Observation \ref{obs:Hall}, there exists a  $k$-attack $A$ such that
$G^2[A]$ is connected and $\abs{N[A] \cap D} < \abs{A}$.
Let $A$ be such an attack where $\abs{range(A)}$ is the smallest possible.
By our assumption, $A$ is not consecutive, since $D$ does not defend against $A$.
Since $A$ is not consecutive, there exists $\ell \in range(A) \setminus A$.
Let $A'= A - \min A + \ell$. 
By Observation \ref{obs:GSquareRemainsConnected}, $G^2[A']$ is connected.
Clearly, $range(A') \subseteq range(A)$. 
Therefore, using Lemma \ref{lem:rangeNeighbourhood} we have $N[A']=N[range(A')] \subseteq N[range(A)]=N[A]$.
This implies
\[
\abs{N[A'] \cap D} \leq  \abs{N[A] \cap D} < \abs{A} = \abs{A'}.
\]
However, $\abs{range(A')} = \abs{range(A)}-1$, contradicting the minimum size of the range of $A$.
\end{proof}

We now observe a few useful properties of sets that defend against attacks on proper interval graphs.
A defense $f$ of a set $D$ against an attack $A$ on $G$ is \emph{monotonic} if $f(i) < f(j)$ whenever $i < j$. 

\begin{lemma}\label{lem:defenseMonotonic}
Let $G$ be a proper interval graph over the vertex set $[1 .. n]$ and $A$ be an attack in $G$.
Let $D \subseteq [1 .. n]$ be a set that defends against $A$.
Then there exists a monotonic defense of $D$ against $A$. 
\end{lemma}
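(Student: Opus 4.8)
The plan is to prove the statement by an exchange argument: starting from an arbitrary defense $f$ of $D$ against $A$, I will repeatedly "uncross" pairs of defenders that violate monotonicity until none remain. Concretely, suppose $i < j$ are both in the domain of $f$ but $f(i) > f(j)$. I would like to swap the values, setting $f'(i) = f(j)$ and $f'(j) = f(i)$, leaving $f$ unchanged elsewhere. For this to yield a valid defense I need $f(j) \in N[i]$ and $f(i) \in N[j]$. The key observation is that in a proper interval graph, where vertices are numbered left-to-right, the relation $a \in N[b]$ for $a < b$ means the interval $I_a$ reaches far enough right to meet $I_b$; since $I_i$ starts no later than $I_j$ and ends no later than $I_j$ (properness), any interval lying "between" $i$ and $j$ in the numbering that meets the outer endpoints also meets the inner ones. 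I would make this precise: from $f(i) \in N[i]$, $f(j) \in N[j]$, $i<j$, and $f(j) < f(i)$, deduce via the proper-interval ordering that $f(j) \in N[i]$ and $f(i) \in N[j]$.

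The heart of the argument is this uncrossing lemma, and I would handle it by case analysis on the position of the attacked vertices $f(i), f(j)$ relative to $i, j$ on the line. Writing $p = f(j)$ and $q = f(i)$ so $p < q$, $p \in N[j]$, $q \in N[i]$, I want $p \in N[i]$ and $q \in N[j]$. If $p \le i$: then $p \le i < j$ and $p \in N[j]$; since $I_p$ meets $I_j$ and $I_i$ lies between $I_p$ and $I_j$ in the left-endpoint order, properness forces $I_p \cap I_i \ne \emptyset$ (the right endpoint of $I_p$ is $\ge$ left endpoint of $I_j \ge$ left endpoint of $I_i$, and left endpoint of $I_p \le$ left endpoint of $I_i$), so $p \in N[i]$. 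If $p > i$: then $i < p < q$ would need care, but in fact $p \in N[i]$ follows similarly from $p$ lying near $i$; and the symmetric argument with $I_q$ meeting $I_i$ while $I_j$ lies between gives $q \in N[j]$. I would organize these into two or three clean sub-cases rather than enumerate all orderings, leaning on the single fact that for $a < b < c$, if $a \in N[c]$ then $a \in N[b]$ and $b \in N[c]$ (a vertex adjacent to something far to its right is adjacent to everything in between), which is immediate from the endpoint orderings in a proper interval representation and is worth stating as a preliminary claim.

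Having established that one uncrossing swap preserves validity of the defense (and preserves its domain, hence surjectivity onto $A$ is untouched), I would invoke a standard termination argument: associate to $f$ the number of inversions, i.e. pairs $i<j$ in $\mathrm{dom}(f)$ with $f(i) > f(j)$, or equivalently $\sum_{i \in \mathrm{dom}(f)} f(i) \cdot w_i$ for suitable positive weights — any strictly monotone potential that decreases under a swap of an inverted adjacent-in-domain pair. Each uncrossing strictly decreases this finite nonnegative quantity, so the process halts at a defense with no inversions, which is exactly a monotonic defense. I expect the main obstacle to be stating and proving the uncrossing claim cleanly: it is geometrically obvious but the bookkeeping of which interval endpoints dominate which — and making sure the "between-ness" hypothesis ($i$ or $j$ sits between the relevant vertices) actually holds in every case — is where the proof can get fiddly. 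Using the consequence of properness in the form "$a<b<c$ and $ac \in E$ (or $a=c$) implies $ab \in E$ and $bc \in E$" should keep this tidy.
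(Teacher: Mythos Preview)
Your proposal is correct and follows essentially the same approach as the paper: both argue by taking a defense with the minimum number of inversions (equivalently, repeatedly swapping inverted pairs) and showing that any inverted pair $i<j$ with $f(i)>f(j)$ can be uncrossed because $f(i)\in N[j]$ and $f(j)\in N[i]$. The only difference is that the paper handles the uncrossing step more compactly by a direct contradiction (if $f(i)\notin N[j]$ then $f(i)\in N[i]\setminus N[j]$ forces $f(i)<f(j)$, since in the proper-interval ordering $N[i]\setminus N[j]$ lies entirely to the left of $N[j]$), whereas you plan an explicit case analysis via the betweenness property ``$a<b<c$ and $ac\in E$ imply $ab,bc\in E$''; both routes work.
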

\begin{proof}
Let $f$ be a defense of $D$ against $A$ with the smallest number of inversions, i.e. smallest numbers of pairs such that $i < j$ but $f(i) > f(j)$.
If the number of inversions is zero, we are done.
Otherwise, consider an inversion of $f$, i.e., a pair $(i,j)$ such that $i < j$ but $f(i) > f(j)$.
Since $f$ is a defense, $f(i) \in N[i]$ and $f(j) \in N[j]$.
We now show that $f(i) \in N[j]$.
Indeed, suppose that $f(i) \notin N[j]$. 
We observe that since $i < j$, $f(i) \in N[i] \setminus N[j]$ and $f(j) \in N[j]$ we have $f(i) < f(j)$, a contradiction.
By symmetry, we have $f(j) \in N[i]$.
Therefore, we can obtain a defense $\bar{f}$ with a smaller number of inversions, 
by setting $\bar{f}(i)=f(j)$, $\bar{f}(j)=f(j)$, and $\bar{f}(\ell)=f(\ell)$ for every $\ell \notin \set{i,j}$.
This contradicts the definition of $f$. 
\end{proof}

\begin{lemma}\label{lem:defenseShift}
Let $G$ be a proper interval graph over the vertex set $[1..n]$ and $A=[i .. j]$ be a consecutive attack in $G$.
Let $\delta$ be a positive integer such that $D=[i+\delta .. j+\delta]$ defends against $A$. 
Then $D'=[i+\delta-1 .. j+\delta-1]$ defends against $A$.
\end{lemma}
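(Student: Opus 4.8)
The plan is to write down the unique candidate for a defense of $D'=[i+\delta-1..j+\delta-1]$ against $A=[i..j]$ and to verify that it is valid by transferring, one attacker at a time, the adjacencies guaranteed by the hypothesized defense of $D=[i+\delta..j+\delta]$.

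First I would normalize the given defense. Since $\abs{D}=\abs{A}=j-i+1$, every defense of $D$ against $A$ is in fact a total bijection from $D$ onto $A$. By Lemma~\ref{lem:defenseMonotonic} there is a monotonic one, say $f$; and a monotonic bijection between two equinumerous sets of consecutive integers is forced to be the order isomorphism, so $f(d)=d-\delta$ for every $d\in D$. Since $f$ is a defense and $\delta\ge 1$, this says precisely that for every $d\in[i+\delta..j+\delta]$ the vertices $d-\delta$ and $d$ are distinct and adjacent, i.e.\ $d-\delta\in N[d]$.

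Next I would consider the map $g\colon D'\to A$ given by $g(d')=d'-\delta+1$. It is a monotonic bijection of $[i+\delta-1..j+\delta-1]$ onto $[i..j]$, so by the definition of a defense it suffices to check that $g(d')\in N[d']$ for every $d'\in D'$. Fix such a $d'$ and set $d=d'+1\in D$. If $\delta=1$ then $g(d')=d'\in N[d']$ trivially; so assume $\delta\ge 2$, and note $d'+1-\delta<d'<d'+1$. By the previous paragraph applied to $d$ we have $d'+1-\delta\in N[d'+1]$, that is, $I_{d'+1-\delta}$ meets $I_{d'+1}$. Combined with the ordering $d'+1-\delta\le d'\le d'+1$ of indices (hence of left endpoints), this forces the left endpoint of $I_{d'}$ to lie at least as far right as that of $I_{d'+1-\delta}$ and at least as far left as that of $I_{d'+1}$, which is in turn at most the right endpoint of $I_{d'+1-\delta}$; thus the left endpoint of $I_{d'}$ lies inside $I_{d'+1-\delta}$, so these intervals meet and $g(d')=d'-\delta+1\in N[d']$. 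Conceptually this last step is just the standard fact that in a proper interval graph numbered by left endpoints every closed neighbourhood is an interval of consecutive vertices. This shows $g$ is a defense of $D'$ against $A$, so $D'$ defends against $A$.

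The only step with genuine content is the final adjacency transfer; the rest is bookkeeping about orderings and cardinalities, and it is exactly where the proper interval structure is used (also implicitly through Lemma~\ref{lem:defenseMonotonic}). I do not expect a real obstacle here: the main point to get right is that $\abs{D}=\abs{A}$, since this is what upgrades the abstract ``defense'' to the concrete shift map $d\mapsto d-\delta$ that can be perturbed by one unit.
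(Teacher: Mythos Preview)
Your proof is correct and follows essentially the same approach as the paper: use Lemma~\ref{lem:defenseMonotonic} together with $\abs{D}=\abs{A}$ to identify the monotonic defense with the shift $d\mapsto d-\delta$, then transfer the adjacency from $(\ell,\ell+\delta)$ to $(\ell,\ell+\delta-1)$ via the consecutive-neighborhood property of proper interval graphs. You are simply more explicit than the paper about why the monotonic defense must be the shift map and about the interval geometry behind the adjacency transfer.
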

\begin{proof}
By Lemma \ref{lem:defenseMonotonic}, there is a monotonic defense $f$ of $D$ against $A$ such that $f(\ell+\delta)=\ell$ for every $\ell \in A$.
We observe that if $\ell+\delta$ is adjacent to $\ell$ then $\ell+\delta-1$ is adjacent to $\ell$.
Therefore, there is a defense ${f'}$ of $D'$ against $A$ such that $f'(\ell+\delta-1)=\ell$ for every $\ell \in A$.
\end{proof}

We are now ready to present our greedy algorithm along with a proof of its correctness and complexity.

\newcommand{\alggreedy}{\textsc{Greedy}}

\alglanguage{pseudocode}

\begin{algorithm}
\caption{\alggreedy}\label{alg:Greedy}
\begin{algorithmic}[1]
\Require {A proper interval graph $G=([1..n],E)$, an integer $k$ less than $n$.}
\Ensure {A $k$-defensive subset $D$ of $[1..n]$ of smallest cardinality.}
\Statex
\State $D \gets \emptyset$.
\For {$j=1$ to $n$}
\State $A \gets [\max \set{1,j-k+1} ..~j]$. \label{lin:CalcA}
\If {$D$ does not defend against $A$}\label{lin:GreedyIf}
\State $j' \gets \max (N[A] \setminus D)$.
\State $D \gets D + j'$.
\EndIf
\EndFor
\State \Return $D$.

\end{algorithmic}
\end{algorithm}

\begin{theorem}\label{thm:Greedy}
Algorithm $\alggreedy$ returns an optimal solution $D$ of {\kdefprobl} in time $\bigoh{(n \cdot k)}$.
\end{theorem}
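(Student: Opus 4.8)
The plan is to prove correctness and running time separately. For correctness, I would argue that the algorithm maintains the invariant that, after processing index $j$, the set $D$ defends against every consecutive $k$-attack $A=[\max\set{1,j'-k+1}..j']$ with $j' \le j$, and moreover that $D$ is no larger than any $k$-defensive set restricted appropriately to the left part of the graph. By Lemma \ref{lem:consecutive}, a set is $k$-defensive iff it defends against every consecutive $k$-attack, so once the loop terminates the returned $D$ is indeed $k$-defensive; the only real content is optimality.

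For optimality I would proceed by an exchange argument. First I would observe that the greedy choice $j' = \max(N[A]\setminus D)$ is the ``rightmost useful defender'': adding any vertex to the right of $N[A]$ cannot help defend against $A$ (it is not in $N[A]$), and by the shift lemma (Lemma \ref{lem:defenseShift}) a defender placed as far right as possible in $N[A]$ is at least as useful for all future attacks $[\ldots..j'']$ with $j'' > j$ as one placed further left. Concretely, suppose $D^*$ is an optimal $k$-defensive set and $D$ is the greedy set. I would show by induction on the number of greedy additions that there is an optimal set $D^*$ with $D \cap [1..p] \subseteq D^*$ where $p$ tracks how far the algorithm has committed; whenever greedy adds $j'$ while $D^*$ does not contain it, $D^*$ must contain some vertex in $N[A]$ not yet matched (since $D^*$ defends against $A$ and, by Observation \ref{obs:Hall}, $\abs{N[A]\cap D^*}\ge \abs{A}$ while $\abs{N[A]\cap D} < \abs{A}$), and that vertex lies to the left of or equal to $j'$, so it can be exchanged for $j'$ using the monotonic-defense and shift lemmas without increasing $\abs{D^*}$ or destroying any defense. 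This yields $\abs{D}\le\abs{D^*}$.

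For the running time, each of the $n$ iterations computes $A$ in $O(1)$ given the ordering, tests whether $D$ defends against the consecutive $k$-attack $A$, and possibly computes $\max(N[A]\setminus D)$. The defense test is the bottleneck: by Lemma \ref{lem:defenseMonotonic} it suffices to greedily match the vertices of $A$ from left to right against the vertices of $D$ in $N[A]$ in order, which can be done in $O(k)$ time if one maintains a sorted view of $D\cap N[A]$ (the window $N[A]$ slides by one vertex per iteration, so it can be updated incrementally), and $\max(N[A]\setminus D)$ is likewise found in $O(k)$. Summing over $n$ iterations gives $\bigoh(n\cdot k)$.

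The main obstacle I expect is the exchange argument for optimality: one must be careful that exchanging a vertex of $D^*$ for the greedy vertex $j'$ preserves the property that $D^*$ defends against \emph{all} consecutive $k$-attacks, not just the current one $A$. This is exactly where Lemma \ref{lem:defenseShift} does the heavy lifting — moving a defender rightward to $j'=\max(N[A]\setminus D)$ can only help attacks whose span lies to the right, and attacks entirely to the left of $j'$ are unaffected because $D$ already agrees with $D^*$ there by the induction hypothesis — but making this rigorous requires identifying precisely which consecutive $k$-attacks could be harmed and checking each case against the shift and monotonicity lemmas.
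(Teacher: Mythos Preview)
Your plan is essentially the paper's approach: the same two-part invariant (feasibility on $V_j$ plus containment in some optimal $D^*$) followed by an exchange argument, with the running time handled via Lemma~\ref{lem:defenseMonotonic}. However, there is a real gap in the feasibility half that you have glossed over.

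You assert as an invariant that after iteration $j$ the current $D$ defends against the new attack $A=[\max\{1,j-k+1\}..j]$, but you give no argument for why adding the \emph{single} vertex $j'=\max(N[A]\setminus D)$ suffices. This is not automatic: it can happen that $j'\notin N[j]$, so $j'$ cannot defend the new attacker $j$ directly. The paper splits into two cases. When $j'\in N[j]$ the argument is easy ($j'$ covers $j$, and $D_{j-1}$ covers $A-j$ by induction). When $j'\in A\setminus N[j]$, the key observation is that by maximality of $j'$ the entire block $[j'+1..\max(N[A])]$ already lies in $D_{j-1}$; one then uses Lemma~\ref{lem:defenseShift} to slide that block of defenders one step left (so they, together with the new $j'$, cover the same tail of $A-j$) and lets $\max(N[A])$ defend $j$. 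You have located the shift lemma as the workhorse of the exchange step, but it is equally essential here, and without this case analysis the invariant is unproven.

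For the exchange step your sketch is close to the paper's, but note that the paper does not invoke Lemma~\ref{lem:defenseShift} there. Instead it argues geometrically: if $\bar D^*=D^*-j''+j'$ failed some consecutive $\bar A$, then necessarily $\max(\bar A)>j$ (because $\bar D^*\supseteq D_j$ already handles $V_j$, which is exactly where the feasibility invariant is used), and $j''\in N[\bar A]$ while $j'\notin N[\bar A]$; but $j''\le j'$ and $I_{j'}$ lies to the left of $\uu(\bar A)$, forcing $I_{j''}$ to lie there too, a contradiction. Your phrasing ``attacks entirely to the left of $j'$ are unaffected because $D$ already agrees with $D^*$ there'' is not quite right --- $D_{j-1}\subseteq D^*$, not equality --- and the correct cutoff is $\max(\bar A)\le j$, not ``left of $j'$''. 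This is precisely why you need the feasibility invariant proved first.
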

\begin{proof}
Algorithm {\alggreedy} starts with an empty set $D$ and considers the vertices $j$ of $G$ one at a time in the left-to-right order, i.e. from $1$ to $n$.
It adds vertices to $D$ so as to maintain the invariant defined below, such that at the end of iteration $j$, $D$ defends all $k$-attacks on $V_j=[1..j]$.
Finally, it returns $D$ after all vertices have been processed.

Let $D_j$ be the value of $D$ at the end of iteration $j$, for $j \in [1..n]$, and let $D_0=\emptyset$ (the value of $D$ at the beginning of iteration 1).
In the remainder of the proof, we show that for every $j \in [0..n]$, the following invariant holds:
\begin{enumerate}
\item \label{itm:DefendsVj} $D_j$ defends against every $k$-attack in $V_j$, and
\item \label{itm:ContainedInOPT} there exists an optimal solution $D^*$ of {\kdefprobl} of $G$ that is a superset of $D_j$.
\end{enumerate}

The invariant clearly holds for $j=0$. 
We now assume that the invariant holds for $j-1$ and prove that it holds for $j$.
Since, by the inductive hypothesis, $D_{j-1}$ defends against all $k$-attacks on $V_{j-1}$, and $D_j \supseteq D_{j-1}$,
in order to prove (i),
%\itemref{DefendsVj}, 
by Lemma \ref{lem:consecutive} it is sufficient to show that $D_j$ defends against every consecutive $k$-attack on $V_j$ that contains $j$.
In other words, it is sufficient to show that $D_j$ defends against the attack $A$ computed at Line \ref{lin:CalcA} at iteration $j$ of the algorithm.
We consider two cases for iteration $j$.
\begin{itemize}
\item {\textbf{$D=D_{j-1}$ defends against $A$ at Line \ref{lin:GreedyIf}.}}
In this case, by the behaviour of the algorithm we have $D_j = D_{j-1}$.
Then $D_j$ defends against $A$, and thus defends against every consecutive $k$-attack on $V_j$ as required.
Furthermore, by the inductive hypothesis, there exists an optimal solution $D^* \supseteq D_{j-1} = D_j$, which proves (ii).
%\itemref{ContainedInOPT}.

\item {\textbf{$D=D_{j-1}$ does not defend against $A$ at Line \ref{lin:GreedyIf}.}}
In this case, we have $D_j = D_{j-1} + j'$ where $j'=\max (N[A] \setminus D_{j-1})$.
Note that $j'$ exists since $N[A] \setminus D_{j-1}$ is non-empty.
Indeed, there is at least one vertex of $A \subseteq N[A]$ that is not in $D_{j-1}$, for otherwise $D_{j-1}$ defends against $A$.
Since $A$ is consecutive and $j'$ is chosen as the maximum among those vertices in $N[A]\setminus D_{j-1}$,
there are two mutually exclusive and complementing cases: either $j'\in N[j]$ or else $j'\in A \setminus N[j]$. 
We now show that in both cases, $D_j$ defends against $A$.
\begin{itemize}
\item {\underline{Case 1:} $j'\in N[j]$}.  
In this case $D_j$ defends against $A$ as follows.
The vertex $j'$ defends $j$ (possibly, $j'=j$), and $D_{j-1}$ defends against $A-j$ by the inductive hypothesis.
\item{\underline{Case 2:} $j'\in A\setminus N[j]$}.
Since $j' = \max (N[A] \setminus D_{j-1})$, we have that $[j'+1 .. \max (N[A])] \subseteq D_{j-1}$.
Let $D_{j-1}^+ = [j'+1 .. \max (N[A])]$, and $D_{j-1}^- = D_{j-1} \setminus D_{j-1}^+$.
By the inductive hypothesis, $D_{j-1}$ defends against $A-j$. 
It follows by Lemma \ref{lem:defenseMonotonic} that there is a monotonic defense of $D_{j-1}$ against $A-j$; 
$D_{j-1}^+$ defends the last $\abs{D_{j-1}^+}$ vertices of $A-j$, and $D_{j-1}^-$ defends the remaining vertices of $A-j$. 
As the last $\abs{D_{j-1}^+}$ vertices of $A-j$ form a consecutive attack and $D_{j-1}^+ = [j'+1 .. \max (N[A])]$ defends it, 
by Lemma \ref{lem:defenseShift}, $[j' .. \max (N[A])-1] \subseteq D_j$ defends against the last $\abs{D_{j-1}^+}$ vertices of $A-j$.
Clearly, $\max (N[A]) \in N[j]$. Then $\max (N[A])$ defends $j$.
Therefore, $D_j$ defends against $A$.
\end{itemize}
\end{itemize}

It remains to show (ii),
{\it i.e.}, that $D_j$ is contained in some optimal solution.
By the inductive hypothesis, there exists an optimal solution $D^* \supseteq D_{j-1}$ of {\kdefprobl}.
Suppose that $D_j \nsubseteq D^*$.
Then, $j' \notin D^*$.
Since $D_{j-1}$ does not defend against $A$, there is at least one vertex $j'' \in (D^* \setminus D_{j-1}) \cap N[A] \subseteq N[A] \setminus D_{j-1}$. 
By the way $j'$ is chosen by the algorithm, we have $j'' \leq j'$.
Let $\bar D^*=D^* - j'' + j'$. 
Clearly, $\abs{\bar D^*}=\abs{D^*}$ and $D_j \subseteq \bar D^*$. 
We now show that $\bar{D}^*$ is a $k$-defensive set.

By Lemma \ref{lem:consecutive}, it is sufficient to show that $\bar{D}^*$ defends against every consecutive $k$-attack.
Suppose that there exists a consecutive $k$-attack $\bar{A}$ that $\bar{D}^*$ does not defend against.
Let $\bar{j} = \max (\bar{A})$.
Putting together the facts that $D^*$ defends against $\bar{A}$, $\bar{D}^* \supseteq D_j$ defends every $k$-attack on $V_j$, and $D^*$ and $\bar{D}^*$ differ exactly by $j'$ and $j''$,
it must be the case that a) $\bar{j} > j$, and b) $j'' \in N[\bar{A}]$, $j' \notin N[\bar{A}]$.
Then $j' \in N[A] \setminus N[\bar{A}]$.
Therefore, the interval $I_{j'}$ does not intersect $\uu(\bar{A})$. 
Moreover, $I_{j'}$ is on the left of $\uu(\bar{A})$ since it intersects $\uu(A)$.
As $j'' < j$ the same holds for $I_{j''}$.
This contradicts the fact $j'' \in N[\bar{A}]$.

We conclude the proof by analyzing the time complexity of $\alggreedy$. If $D$ is maintained as a sorted linked list of vertices, the dominant part of the main loop becomes Line \ref{lin:GreedyIf} that checks whether $D$ defends against $A$,
and this can be implemented in time $\bigoh(k)$ using Lemma \ref{lem:defenseMonotonic}. 
\end{proof}

\section{An Improved Greedy Algorithm Using the Bubble Model}\label{sec:bubblesModel}
In this section, we present the bubble model for proper interval graphs (introduced in \cite{Heggernes2007TR,HMP09}) and modify Algorithm {\alggreedy} of the previous section to provide a better time complexity by taking advantage of this model.

A \emph{2-dimensional bubble structure} $\bb$ for a finite non-empty set $V$ is a 2-dimensional arrangement of bubbles $\set{B_{i,j}~|~j \in [1..c], i \in [1..r_j]}$ for some positive integers $c, r_1, \ldots r_c$, such that $\bb$ is a \emph{near-partition} of $V$. That is, $V = \cup \bb$ and the sets $B_{i,j}$ are pairwise disjoint, allowing for the possibility of $B_{i,j}=\emptyset$ for arbitrarily many pairs $i,j$. For an element $v \in V$, we denote by $i(v)$ and $j(v)$ the unique indices such that $v \in B_{i(v),j(v)}$.

Given a bubble structure $\bb$, the graph $G(\bb)$ defined by $\bb$ is the following graph:
\begin{enumerate}
\item $V(G(\bb))=\cup \bb$, and
\item $uv \in E(G(\bb))$ if and only if one of the following holds:
\begin{itemize}
\item $j(u)=j(v)$,
\item $j(u)=j(v)+1$ and $i(u) < i(v)$.
\end{itemize}
\end{enumerate}

We say that $\bb$ is a \emph{bubble model} for $G(\bb)$.
A \emph{compact representation} for a bubble model is a list of \emph{columns} each of which contains a list of non-empty bubbles, and each bubble contains its row number in addition to the number of vertices in this bubble. 
We summarize the main result about bubble representations of proper interval graphs as follows.

\begin{theorem}\cite{Heggernes2007TR,HMP09}\label{thm:Bubbles}
\begin{enumerate}
\item A graph is proper interval if and only if it has a bubble model.
\item A bubble model for a graph on $n$ vertices contains $O(n^2)$ bubbles and it can be computed in $O(n^2)$ time.
\item Given a proper interval graph on $n$ vertices, a bubble model of its compact representation can be computed in $O(n)$ time.
\end{enumerate}
\end{theorem}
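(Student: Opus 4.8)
\begin{sketch}
The statement is quoted from \cite{Heggernes2007TR,HMP09}; here is how one would prove it. All three parts come from an explicit translation between proper interval representations and bubble models. \emph{For the ``only if'' direction of part~1}, fix a unit interval representation of $G$ (recall proper interval $=$ unit interval) and list the vertices $v_1,\dots,v_n$ by increasing left endpoint, which is simultaneously the order of right endpoints; then each $N[v_\ell]$ is an interval $[p_\ell..q_\ell]$ of the list with $p_\ell,q_\ell$ nondecreasing in $\ell$. Sweep the list once and cut it into maximal runs $C_1,\dots,C_c$ in which every vertex is adjacent to the first vertex of its run; these are the \emph{columns}, giving $j(v)$. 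Using that $G$ is proper one checks that each $C_j$ is a clique, that $N[v]\subseteq C_{j-1}\cup C_j\cup C_{j+1}$ for $v\in C_j$ (so columns at distance $\ge 2$ induce no edge), and that inside a column the twin classes occur as consecutive sub-runs; let $i(v)$ be the position of $v$'s twin class among the twin classes of its column. The crux is to verify that this row assignment realizes the staircase: for consecutive columns, $u\in C_{j+1}$ is adjacent to $v\in C_j$ exactly when $i(u)<i(v)$; this is where the ``no proper containment'' hypothesis is used, much as in Lemma~\ref{lem:rangeNeighbourhood}. Granting it, the adjacency rule reproduces $E(G)$, so $\set{B_{i(v),j(v)}}$ is a bubble model of $G$ (for disconnected $G$ one handles each component separately and spaces them apart). \textbf{I expect this verification to be the main obstacle.}

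\emph{For the ``if'' direction}, given a bubble structure $\bb$ with columns $1,\dots,c$ and at most $r\defined\max_j r_j$ rows, fix $\varepsilon$ with $0<\varepsilon r<1$ and send $v\in B_{i,j}$ to the interval $I_v=[\,j+\varepsilon i\,,\ j+1+\varepsilon(i-\tfrac12)\,]$. A short computation shows all $I_v$ have the common length $1-\varepsilon/2$, that two vertices of a common column have intersecting intervals, that $I_u\cap I_v\ne\emptyset$ iff $i(u)<i(v)$ when $j(u)=j(v)+1$, and that columns differing by at least $2$ give disjoint intervals. Hence the intersection graph of $\set{I_v}$ is $G(\bb)$, and being a unit interval graph it is proper.

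\emph{For parts~2 and~3}, the construction above uses $c\le n$ columns and at most $n$ rows per column, so the full arrangement $\set{B_{i,j}:j\in[1..c],\,i\in[1..r_j]}$ has $O(n^2)$ bubbles (most of them empty); producing it --- together with the $O(n^2)$ work to recognize $G$, build a unit interval representation, and run the sweep --- takes $O(n^2)$ time, which is part~2. For part~3, if $G$ is presented by a proper interval representation with its $2n$ endpoints sorted, the $p_\ell,q_\ell$ are found by one two-pointer pass, and then a single left-to-right sweep outputs, column by column, only the non-empty bubbles with their row numbers and vertex counts --- the twin classes of each column together with their sizes --- in $O(n)$ time, which is exactly the compact representation.
\end{sketch}
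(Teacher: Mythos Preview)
The paper does not prove Theorem~\ref{thm:Bubbles}; it is quoted verbatim from \cite{Heggernes2007TR,HMP09} and used as a black box, so there is no in-paper argument to compare your sketch against. Your write-up is therefore not a reconstruction of the authors' proof but an independent outline of the cited result.

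As an outline it is reasonable. The ``if'' direction is clean: your explicit unit-interval assignment $I_v=[\,j+\varepsilon i,\ j+1+\varepsilon(i-\tfrac12)\,]$ does the job, and the three intersection checks you list are correct. For the ``only if'' direction, the column decomposition by greedy maximal cliques starting at each leftover leftmost vertex is fine and does give the property $N[v]\subseteq C_{j-1}\cup C_j\cup C_{j+1}$. The point that genuinely needs work, and that you correctly flag as ``the main obstacle,'' is the row assignment: simply taking $i(v)$ to be the index of $v$'s twin class \emph{within its own column} is not obviously enough, because the bubble rule compares row indices \emph{across} columns, and nothing in your description forces the twin-class counts of $C_j$ and $C_{j+1}$ to line up so that ``$i(u)<i(v)$'' reproduces the actual bipartite adjacency. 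In the original references this is handled by a more careful, globally consistent numbering (allowing empty bubbles as placeholders) derived from the straight enumeration; your local per-column numbering would need an extra argument---or a small modification inserting empty rows---to make the staircase identity go through. The complexity claims in parts~2 and~3 follow once the construction is nailed down.
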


It follows from the definition of a bubble representation that every column represents a clique, i.e., a set of pairwise adjacent vertices, and two vertices of a bubble are twins.
For simplicity we assume an arbitrary order on the set of vertices of a bubble.
A bubble model implies a so-called \emph{nested neighborhood} structure between any two consecutive columns of a bubble representation. The set of vertices in two consecutive columns induces a co-chain graph. 
As such, proper interval graphs can be seen as a succession of co-chain graphs  \cite{HMP09}, generalizing the notion of co-chain graphs.

A \emph{linear bubble representation} is a list of bubbles ordered by their column order of the bubble representation and by their row order within every column. 
In other words, vertex 1 and all its twins are in $B_1$, the first vertex not in $B_1$ and all its twins are in $B_2$, and so on. 
We add an artificial bubble $B_0=\emptyset$.
See Figure \ref{fig:ProperIntervalRep2} for an example.
The following lemma shows how we can obtain information necessary to develop an efficient greedy algorithm in linear time from a compact representation. 
We note that this information can be derived from the proof of Theorem 4 in \cite{Heggernes2007TR}; here we give it in a concise way for the sake of completeness.

\begin{figure}[h]
\begin{center}
\scalebox{1}{\commentfig{
\includegraphics[width=11cm]{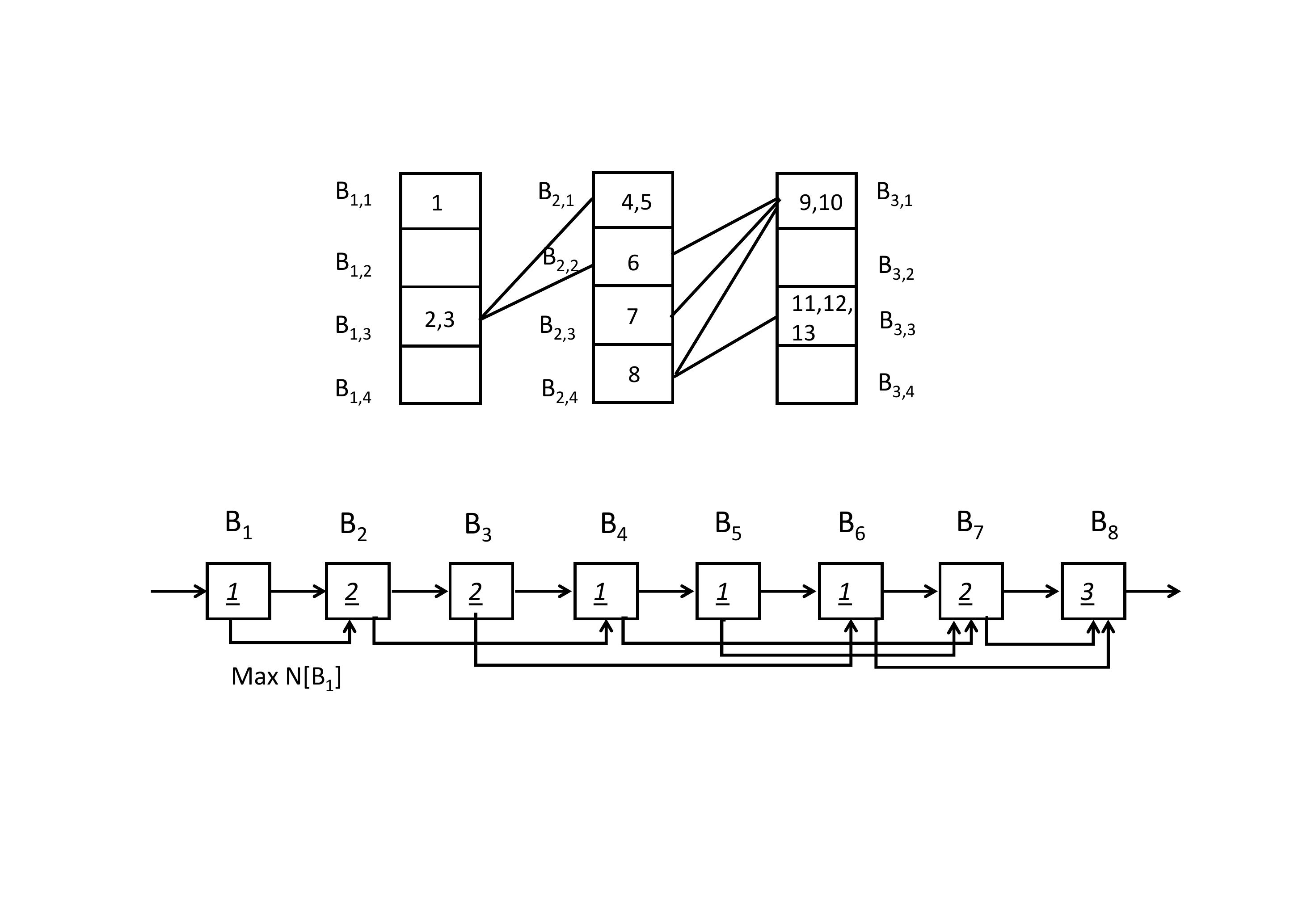}}}
\end{center}
\vspace{-1.5cm}
\caption{A bubble model of the proper interval graph in Figure \ref{fig:ProperIntervalRep1} and the corresponding linear bubble representation depicted below. Note that the vertex set of the graph can be partitioned into three cliques $\set{1,2,3},\set{4,5,6,7,8},\set{9,10,11,12,13}$. Every such clique is represented by a column of the bubble representation. The inclined lines joining two consecutive columns depict the adjacencies implied by the definition of a bubble model. The linear bubble model contains 8 bubbles, as the number of non-empty bubbles in the two dimensional bubble representation. The number written in the square representing a bubble is the number of vertices in it. Every bubble points to the next bubble. The elbow arrow leaving bubble $B_i$ points to the bubble that contains $\max(N[B_i])$.} \label{fig:ProperIntervalRep2}

\end{figure}

\begin{lemma}
A linear bubble representation along with $\abs{B}, \max(B), \max(N[B]), \min(B)$, and  $\min (N[B])$ for every bubble $B \neq \emptyset$ can be computed from a compact bubble representation $\bb$ in time $\bigoh(\abs{\bb})$.
\end{lemma}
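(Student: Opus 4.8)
~\\[1ex]
\textbf{Proof proposal.}
The plan is to show that all five quantities per bubble can be produced in a single forward scan over the compact representation (two passes at most), so that the total running time is proportional to the number of non-empty bubbles, $\abs{\bb}$. The first task is to build the linear bubble representation itself: we read the columns of the compact representation left-to-right and, within each column, the non-empty bubbles in increasing row order, emitting them as a linked list $B_0 = \emptyset, B_1, B_2, \ldots$ and recording for each emitted bubble its column index $j(B)$, its row index $i(B)$, and its size $\abs{B}$, which is given directly. While we do this we also accumulate the prefix sums of the sizes, i.e.\ for each bubble the index of its first vertex in the global vertex numbering $[1..n]$; this immediately yields $\min(B)$ (the first vertex of $B$) and $\max(B)$ (the last vertex of $B$), since vertices inside a bubble are numbered consecutively in our chosen order, and bubbles are emitted in the order of increasing vertex number. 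All of this is $\bigoh(\abs{\bb})$.

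The substantive part is computing $\max(N[B])$ and $\min(N[B])$ for every bubble $B$. Recall the adjacency rule of the bubble model: $u$ and $v$ are adjacent iff $j(u)=j(v)$, or $j(u)=j(v)+1$ and $i(u)<i(v)$ (symmetrically). So for a bubble $B$ in column $q$ with row $i(B)$, the closed neighborhood $N[B]$ consists of: all of column $q$; the vertices of column $q+1$ lying in bubbles of row index strictly greater than $i(B)$; and the vertices of column $q-1$ lying in bubbles of row index strictly less than $i(B)$. Consequently $\max(N[B])$ is the largest vertex number among column $q$'s vertices and those column-$(q+1)$ bubbles with row index $>i(B)$; since within a column vertex numbers increase with row index, this is just $\max(B_{\mathrm{last}(q)})$ compared against $\max$ of the highest-row bubble of column $q+1$ that sits above $B$'s row — and because higher-row bubbles in column $q+1$ have larger vertex numbers, $\max(N[B])$ is attained either at the last vertex of column $q$ or at the last vertex of column $q+1$, whichever bubble is nonempty and row-eligible. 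Symmetrically, $\min(N[B])$ is the first vertex of column $q$ or the first vertex of the lowest-row eligible bubble of column $q-1$. I would therefore precompute, for each column, the first and last vertex numbers of that column (a byproduct of the prefix-sum pass), and then for each bubble resolve the cross-column contribution by a coordinated walk: traverse columns $q$ and $q+1$ simultaneously with two pointers advancing in row order, which matches each bubble of column $q$ with the block of column-$(q+1)$ bubbles that are row-eligible for it. This merge-style sweep over two consecutive columns costs time proportional to the number of bubbles in those two columns, and summing over all column pairs gives $\bigoh(\abs{\bb})$ overall; the "elbow arrow" of Figure~\ref{fig:ProperIntervalRep2} records exactly the resulting pointer from $B$ to the bubble containing $\max(N[B])$.

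The main obstacle I anticipate is not asymptotic but bookkeeping: handling boundary columns ($q=1$ has no column $q-1$, $q=c$ has no column $q+1$), handling empty bubbles (which the compact representation simply omits, so row indices in a column need not be contiguous and the two-pointer merge must compare stored row numbers rather than assume they are $1,2,3,\ldots$), and the strictness of the inequality $i(u)<i(v)$ — a column-$(q+1)$ bubble with the \emph{same} row index as $B$ contributes nothing, so the pointer comparisons must be strict in the correct direction. Once these edge cases are pinned down, correctness follows directly from the adjacency rule of the bubble model together with the fact that vertex numbers are monotone in (column, row) order, and the time bound follows because each bubble is touched a constant number of times across the prefix-sum pass and the per-column-pair merges. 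I would present the algorithm as two explicit passes and argue the $\bigoh(\abs{\bb})$ bound by charging each operation to a distinct bubble occurrence.
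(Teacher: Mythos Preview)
Your approach---prefix sums for $\min(B)$ and $\max(B)$, then a two-pointer merge over each pair of consecutive columns to determine the cross-column neighbour pointers---is exactly what the paper does, and your complexity accounting (each bubble touched a constant number of times across the prefix-sum pass and the per-column-pair merges) matches the paper's argument.

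There is, however, a concrete slip in your reading of the adjacency rule that would make the computed values of $\max(N[B])$ and $\min(N[B])$ wrong. The rule ``$j(u)=j(v)+1$ and $i(u)<i(v)$'' says that between adjacent columns the vertex in the \emph{higher}-numbered column must have the \emph{smaller} row index. Hence for $B$ in column $q$, its neighbours in column $q{+}1$ are the bubbles with row index \emph{strictly less} than $i(B)$, and its neighbours in column $q{-}1$ are those with row index \emph{strictly greater} than $i(B)$---the reverse of what you wrote. Consequently $\max(N[B])$ is the last vertex of the largest-row bubble of column $q{+}1$ whose row index is still below $i(B)$ (or the last vertex of column $q$ if no such bubble exists). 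This is precisely why the paper runs its merge over the two columns in \emph{decreasing} row order: as $i(B)$ decreases, the eligible set in column $q{+}1$ can only shrink, so a single monotone pointer suffices. Once you flip the inequalities your sweep is correct and coincides with the paper's; nothing else in your plan needs to change.
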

\begin{proof}
We first note that for every two consecutive columns $j$ and $j+1$ and for every non-empty bubble $B_{i,j}$, we can determine a pointer $pMax$ to the last bubble adjacent to $B_{i,j}$ by processing the bubbles of column $j$ and column $j+1$ in decreasing row order of each. 
We start with the last non-empty bubbles of columns $j$ and $j+1$ as the current bubbles;
then skip bubbles of column $j+1$ until reaching a non-empty bubble of column $j+1$ that is adjacent to the current bubble of column $j$, which becomes the value of $pMax$ for the bubble of column $j$. 
We then move to the previous non-empty bubble of column $j$ and repeat the above process.
The time complexity of this procedure is the sum of the number of bubbles in columns $j$ and $j+1$.
Note that adjacencies can be determined in constant time by using the row and column indices of the bubbles.
By applying this procedure to every two consecutive columns, we can compute the $pMax$ pointers for all the bubbles in time $\bigoh(\abs{\bb})$.

After this stage, we concatenate the column lists into one list and re-index the bubbles by making the first bubble of the first column $B_1$ and so on.
We then compute $\max(B_i) = \sum_{j=1}^i \abs{B_j}$ for each bubble $B_i$.
Each of these steps can be clearly performed in time $\bigoh(\bb)$. 
Finally, we observe that $\max(N[B_i])$ is equal to $\max(B_j)$, where $B_j$ is the bubble pointed by the pointer $pMax$ of $B_i$; $\min(B_i) = \max (B_{i-1})+1$, and $\min (N[B_i])$ can be handled symmetrically.
\end{proof}

The algorithm presented in this section simulates algorithm {\alggreedy}, but instead of considering vertices one at a time, it processes them in larger chunks. 
Roughly speaking, it considers a bubble at a time in the order implied by a linear bubble representation.
The following lemma enables us to simulate  multiple iterations of {\alggreedy} in a single iteration as long as the right endpoint of $A$ remains in the same bubble.

\begin{lemma}\label{lem:rightbubble}
Let $A$ be a $k$-attack considered at some iteration of {\alggreedy} and $D$ be the set computed at that iteration to defend $A$.
Let $A'$ be a $k$-attack considered in a subsequent iteration with set $D'$ computed to defend $A'$.
If the last bubbles of $A$ and $A'$ in a linear bubble representation are the same, then $D' \setminus D \subseteq N[A] \setminus D$.
\end{lemma}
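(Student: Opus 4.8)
The plan is to unwind the definitions of the greedy process between the two iterations and track precisely which new defenders can be added along the way. Let the common last bubble be $B$, and let $A = [a \mathbin{..} j]$ be the attack considered at the earlier iteration (so $j \in B$) and $A' = [a' \mathbin{..} j']$ the one considered at the later iteration (so $j' \in B$ and $j' \ge j$, hence $a' \ge a$ as well since both attacks have size at most $k$ and $A'$ occurs later). Every iteration between the two uses an attack whose right endpoint lies in $B$ (since the right endpoint is monotone in the iteration index and both endpoints of the sandwiching attacks are in $B$). So it suffices to analyze a single greedy step whose attack $\tilde A = [\tilde a \mathbin{..} \tilde\jmath]$ has $\tilde\jmath \in B$, show that any defender it adds lies in $N[A] \setminus D$, and then iterate.

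First I would establish the key containment $N[\tilde A] \subseteq N[A]$ for any such intermediate attack $\tilde A$. Because $\tilde\jmath$ and $j$ are both in bubble $B$, they are twins, so $N[\tilde\jmath] = N[j]$, and in particular $\max N[\tilde A] = \max N[\tilde\jmath] = \max N[j] = \max N[A]$ (using that in a proper interval graph, with the left-to-right vertex order, the right end of the neighborhood of a consecutive set is the right end of the neighborhood of its rightmost vertex). On the left side, since $\tilde a \ge a$ we have $\min N[\tilde A] = \min N[\tilde a] \ge \min N[a] = \min N[A]$. As both $N[A]$ and $N[\tilde A]$ are intervals of consecutive vertices (Lemma~\ref{lem:rangeNeighbourhood} plus the proper interval structure — a consecutive attack has $N[A] = N[\mathrm{range}(A)]$ equal to a consecutive set), it follows that $N[\tilde A] \subseteq N[A]$. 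Now in the greedy step for $\tilde A$, if a defender is added it is $\tilde\jmath' = \max(N[\tilde A] \setminus \tilde D)$ for the current set $\tilde D$; this vertex lies in $N[\tilde A] \subseteq N[A]$. It remains to see $\tilde\jmath' \notin D$: since $D \subseteq \tilde D$ (the greedy set only grows and $D$ is the value at an earlier iteration) and $\tilde\jmath' \notin \tilde D$, indeed $\tilde\jmath' \notin D$. Hence $\tilde\jmath' \in N[A] \setminus D$.

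Chaining this over all intermediate iterations from the one producing $D$ up to the one producing $D'$, every vertex added after $D$ lies in $N[A] \setminus D$, so $D' \setminus D \subseteq N[A] \setminus D$, as claimed.

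The main obstacle I expect is the bookkeeping around the left endpoints and making the claim ``$N[\tilde A] \subseteq N[A]$'' airtight: one must be careful that all the relevant neighborhoods are genuinely consecutive (intervals of vertices) so that a comparison of their left and right endpoints suffices, and that $\tilde a \ge a$ holds for every intermediate attack. The latter follows because consecutive $k$-attacks in {\alggreedy} are of the form $[\max\{1, \ell - k + 1\} \mathbin{..} \ell]$ with $\ell$ increasing, so both endpoints are nondecreasing in $\ell$; and the former follows from Lemma~\ref{lem:rangeNeighbourhood} together with the fact, used implicitly throughout Section~\ref{sec:properIntervalGraph}, that in a proper interval graph the closed neighborhood of a consecutive vertex set is again a consecutive vertex set. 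The twin property of vertices within a bubble (stated right after Theorem~\ref{thm:Bubbles}) is what pins the right endpoint of the neighborhood in place across all the intermediate steps, and is the crux of why ``same last bubble'' is the right hypothesis.
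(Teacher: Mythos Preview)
Your proof is correct and follows essentially the same approach as the paper's: both arguments hinge on the twin property of vertices within a bubble to pin down $\max N[\tilde A] = \max N[A]$, the monotonicity of the left endpoint to get $\min N[\tilde A] \ge \min N[A]$, and hence $N[\tilde A] \subseteq N[A]$, combined with the fact that the defender set only grows. The only difference is presentational: the paper argues by contradiction (assume a first violating iteration and derive $x \in N[A'] \setminus (D'-x) \subseteq N[A] \setminus D$), whereas you step through each intermediate iteration directly; the underlying logic is the same.
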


\begin{proof}
Suppose that the claim is not correct and assume without loss of generality that the first iteration of {\alggreedy} that adds an element $x \notin N[A] \setminus D$ is the one that considers the $k$-attack $A'$. 
Since $\max (A)$ and $\max (A')$ are in the same bubble they are twins, and $\min (A) \leq \min (A')$ we have $N[A'] \subseteq N[A]$.
By the behavior of {\alggreedy}, we have $x \in N[A'] \setminus (D' - x) \subseteq N[A] \setminus D$.
Then $x \in N[A'] \setminus N[A] = \emptyset$, a contradiction. 
\end{proof}

A monotonic defense $f$ of $D$ against a consecutive attack $A=[i..j]$ is \emph{rightmost} if $f(\max (D))=j$ and  every prior element $j' < j$ of $A$ is defended by the rightmost possible element of $D$ that is not used to defend an element $i\prime\prime \in [j'+1..j]$. 

We say that a bubble $B_i \in \bb$ is a bubble of $f$ if $f(v)$ is defined for at least one
element $v \in B_i \cap D$.
For a rightmost defense $f$ of $D$ against a consecutive attack $A$, and a bubble $B_i$ of $f$ 
we denote $f_i = f(\max (B_i))$ and $slack_i(f) = \max (N[B_i]) - f_i$.
Intuitively, $slack_i(f)$ is the potential of an element $u$ of $B_i \cap D$ to defend elements to the right of $f(u)$. 
We define the overall slack of a defense $f$ as $slack(f)=\min slack_i(f)$ where the minimum is taken over all bubbles of $f$.
This value determines the potential of the current set of defenders to defend elements to the right of $A$.

For a consecutive attack $A=[i..j]$ and an integer $\delta$ we denote by $A+\delta$ the ``shifted'' consecutive attack $[i+\delta..j+\delta]$. 
For a rightmost defense $f$ against a consecutive attack $A$, and an integer $\delta$, we denote by $f'=f+\delta$ the defense obtained by shifting $f$ by $\delta$ positions, i.e., $f'(v)=f(v)+\delta$, provided that $f(v)$ is defined and $v$ is adjacent to $f(v)+\delta$, which it is for all $v$ when $\delta \leq slack(f)$.

\newcommand{\alggreedyBubble}{\textsc{GreedyByBubble}}
\newcommand{\addNewVertices}{\textsc{AddNewVertices}}
\newcommand{\zeroSlackVertex}{\textsc{Bottleneck}}
\newcommand{\removeLeft}{\textsc{removeLeft}}

This leads us to algorithm {\alggreedyBubble}, given in Algorithm $\ref{alg:GreedyBubble}$ which maintains the invariant that $D$ admits a rightmost  monotonic defence $f$ against $A$ at the end of every iteration.
The algorithm employs several functions. 
The function $\addNewVertices(\Delta, A, D, f)$ adds $\Delta$ vertices to $A$ and $\Delta$ vertices to $D$ and updates $f$ so that the vertices added to $D$ defend the vertices added to $A$.
The function $\zeroSlackVertex(f)$ is called when $slack(f)$ is zero, and returns the index of the rightmost vertex of $A$ defended by a zero-slack bubble of $f$.  
The function $\removeLeft(\delta, A, f)$ removes the leftmost $\delta$ vertices of the (current) attack $A$ and defense $f$.

\alglanguage{pseudocode}

\begin{algorithm}
\caption{\alggreedyBubble}\label{alg:GreedyBubble}
\begin{algorithmic}[1]
\Require {A proper interval graph $G=([1..n],E)$.}
\Require {An integer $k$ less than $n$.}
\Ensure {Return a $k$-defensive subset $D$ of $[1..n]$ of smallest cardinality.}
\Statex
\State $\bb \gets$ a linear bubble representation of $G$. \label{lin:computeBubble}
\State $f \gets D \gets \emptyset$
\LineComment{$A = [firstIndex .. lastIndex]$.}
\State $firstIndex = 1$; $lastIndex = 0$;
\Comment {$A \gets \emptyset$}
\State \Call {\addNewVertices} {$k$, $A$, $D$, $f$}
\While {$lastIndex <= n$}
  	\If {$slack(f) > 0$}
		\State $A \gets A+slack(f)$ \label{lin:shiftA}
		\State $f \gets f + slack(f)$ \label{lin:shiftF}
	\Else  
		\State $v \gets $ \Call {\zeroSlackVertex}{$f$} \label{lin:findZeroSlack}
		\Comment The rightmost vertex of $A$ defended by a zero-slack bubble of $f$
		\State $move \gets \min \set{n-lastIndex, v-firstIndex+1}$
		\State \Call {\removeLeft} {$move$, $A$, $f$}  \label{lin:removeLeft}
		\State \Call{\addNewVertices} {$move$, $A$, $D$, $f$} \label{lin:addNewVertices}
	\EndIf
\EndWhile
\State \Return $D$
\Statex
\Function{\addNewVertices}{$\Delta$, $A$, $D$, $f$}
    \While {$\Delta > 0$}
        \State $i \gets $ the last bubble of $A$
        \If {$\max (B_i) = lastIndex$}
        \State $\delta \gets 1$
        \Else 
        \State $\delta \gets \min \set{\Delta, \max (B_i) - lastIndex}$
        \EndIf
        \State $lastIndex = lastIndex + \delta$ \label{lin:shiftLastIndex}
        \State $\Delta \gets \Delta - \delta$
        \State $NEW \gets $ the $\delta$ rightmost vertices of $N[A] \setminus D$
        \State $D \gets D \cup NEW$
        \State Update $f$ so that the vertices of $NEW$ defend the last $\delta$ vertices of $A$ \label{lin:updateF}
    \EndWhile
\EndFunction
\end{algorithmic}
\end{algorithm}

\begin{theorem}
Algorithm {\alggreedyBubble}  returns a minimum $k$-defensive set $D$ of a given proper interval graph $G$.
\end{theorem}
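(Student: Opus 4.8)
The plan is to prove that {\alggreedyBubble} faithfully simulates {\alggreedy}: the sequence of defenders it adds is identical to the one added by {\alggreedy}, so the returned set equals the set $D_n$ produced by {\alggreedy} and is, by Theorem~\ref{thm:Greedy}, a minimum $k$-defensive set. Throughout we use two standard facts about a connected proper interval graph with its left-to-right vertex numbering: each closed neighborhood $N[v]$ is a set of consecutive integers, and the vertices of a single bubble are mutual twins.

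First I would make the loop invariant precise. Writing $A=[firstIndex..lastIndex]$ for the current attack, $f$ for the current defense and $D$ for the current defender set, I claim that at the end of every iteration of the \textbf{while} loop: (a) $A$ is exactly the consecutive $k$-attack $[\max\set{1,lastIndex-k+1}..lastIndex]$ that {\alggreedy} considers at its iteration $lastIndex$ (Line~\ref{lin:CalcA}); (b) $D$ equals the value $D_{lastIndex}$ of {\alggreedy}'s set after it has processed vertices $1,\dots,lastIndex$, in the notation of the proof of Theorem~\ref{thm:Greedy}; and (c) $f$ is a rightmost monotonic defense of $D$ against $A$. The base case is the initial call {\addNewVertices}$(k,A,D,f)$, which must replay {\alggreedy}'s first $k$ iterations; since for $j\le k$ the set $D_{j-1}$ (of size at most $j-1$) cannot defend $[1..j]$, {\alggreedy} adds exactly one vertex at each of those iterations, so the count matches, and one is left to check that the \emph{identities} of the added vertices match as well.

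Next I would treat the inductive step through the two branches. If $slack(f)>0$ (Lines~\ref{lin:shiftA}--\ref{lin:shiftF}) I must show that {\alggreedy}, run on the attacks $A+1,\dots,A+slack(f)$, adds no defender, so that jumping to $A+slack(f)$ with defense $f+slack(f)$ is legitimate. For each $\delta\in[1..slack(f)]$ the shifted map $f+\delta$ is a defense of $D$ against $A+\delta$: it surjects onto $A+\delta$ because $f$ surjects onto $A$ and the shift is a bijection, and $f(v)+\delta\in N[v]$ for every $v\in\mathrm{dom}(f)$ because $v$ is a twin of the endpoint $\max(B_i)$ of its bubble $B_i$, monotonicity gives $f(v)\le f_i$, and $\delta\le slack_i(f)=\max(N[B_i])-f_i$; this is precisely Lemma~\ref{lem:defenseShift} applied inside each bubble. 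Hence $D$ already defends each $A+\delta$, {\alggreedy} adds nothing, and translation invariance keeps $f+slack(f)$ rightmost and monotonic. If $slack(f)=0$ (Lines~\ref{lin:findZeroSlack}--\ref{lin:addNewVertices}), let $v=\zeroSlackVertex(f)$ be the rightmost attack vertex defended by a zero-slack bubble and $move=\min\set{n-lastIndex,\;v-firstIndex+1}$. I would argue that removing the leftmost $move$ attack vertices (and the defenders mapped to them) frees exactly the defenders of the bottleneck bubbles up to the one that defends $v$, so that the subsequent {\addNewVertices}$(move,A,D,f)$ is free to reproduce the iterations $lastIndex+1,\dots,lastIndex+move$ of {\alggreedy}: as long as the right endpoint of the attack stays in a fixed bubble, Lemma~\ref{lem:rightbubble} confines every defender {\alggreedy} adds to $N[A]\setminus D$ for the earlier attack $A$, and since {\alggreedy}'s rule is to take $\max(N[A]\setminus D)$ and the eligible vertices of $N[A]\setminus D$ form a suffix of $N[A]$, a block of $\delta$ such iterations adds exactly ``the $\delta$ rightmost vertices of $N[A]\setminus D$'' --- which is what {\addNewVertices} inserts --- while the assignment at Line~\ref{lin:updateF} keeps $f$ rightmost and monotonic. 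Finally, each \textbf{while} iteration strictly increases $lastIndex$ (by $slack(f)\ge1$, or by $move\ge1$ unless $lastIndex=n$), so the loop terminates with $lastIndex=n$ and $D=D_n$, which is optimal by Theorem~\ref{thm:Greedy}.

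The step I expect to be the main obstacle is exactly this batching claim for {\addNewVertices}: showing that one inner step leaping over a full bubble of attack endpoints produces the same defender set, and an extension of the same rightmost monotonic defense, as the corresponding block of consecutive {\alggreedy} iterations. Making this precise forces one to combine Lemma~\ref{lem:rightbubble} (to confine all new defenders to $N[A]\setminus D$), the twin property of a bubble, and the interval structure of closed neighborhoods, and then to verify that the ``$\delta$ rightmost of $N[A]\setminus D$'' rule together with the update at Line~\ref{lin:updateF} is consistent with rightmostness. One must also check that the bottleneck vertex $v$ and the quantity $move$ are chosen so that the zero-slack branch never overshoots a step at which {\alggreedy}'s behaviour would change, i.e., that control returns to the shifting branch in precisely the configuration {\alggreedy} would be in.
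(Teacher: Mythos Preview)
Your plan is correct and follows essentially the same route as the paper: both argue that {\alggreedyBubble} simulates {\alggreedy} iteration-for-iteration, handling the $slack(f)>0$ branch by observing that the shifted defense $f+\delta$ remains valid (so {\alggreedy} would add nothing), and the $slack(f)=0$ branch by invoking Lemma~\ref{lem:rightbubble} to justify that {\addNewVertices} picks exactly the defenders {\alggreedy} would pick while the right endpoint of $A$ stays in a fixed bubble. Your write-up is in fact more careful than the paper's own proof, which leaves the loop invariant and the ``batching'' argument largely implicit; the one minor slip is the appeal to Lemma~\ref{lem:defenseShift} in the $slack(f)>0$ case, which is not quite the right citation (that lemma shifts the \emph{defender} interval toward the attack, whereas here you need that $f(v)+\delta\in N[v]$, which follows directly from the definition of $slack_i(f)$ and the twin property), but the underlying argument you give is sound.
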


\begin{proof}
The proof is based on the correctness of Algorithm  {\alggreedy}.   
{\alggreedy} goes vertex by vertex through $G$.  
At each iteration, it attempts to “shift” the current defense to the right by one vertex to defend against the new attack (i.e., the last $k-1$ vertices of the prior attack and one new vertex).  
If it can defend this new attack, then nothing is done.  If it can not, then it adds one vertex to $D$, being the rightmost neighbor of the new attack that is not already in $D$.

Algorithm {\alggreedyBubble} maintains a rightmost defense $f$ of $A$ at every iteration. 
This invariant is initially vacuously true and incrementally maintained by Line \ref{lin:updateF}.

When $slack(f)>0$, the defense $f+1$ defends against the attack $A+1$ for the next $slack(f)$ iterations, and
Algorithm  {\alggreedy} would not add new vertices to $D$.  
As such, Algorithm  {\alggreedyBubble}, immediately updates the attack $A$ and the defense $f$ $slack(f)$ places to the right.

However, when $slack(f) = 0$, let $B_i$ be the rightmost bubble of $f$ with zero slack, and let $v=f(\max (B_i))=\max (N[B_i])$.
In other words, $v$ is the rightmost vertex of $A$ defended by a zero-slack bubble of $f$ (since $f$ is monotonic).
None of the vertices in the current defense $f$ that defend vertex $v$ and any prior vertex of $A$ can be used to defend vertices beyond $v$.
Thus, Algorithm  {\alggreedy} would have to add one new vertex to $D$ at every iteration until $v \notin A$.
The number  of these iterations is the number of vertices in $A$ from its leftmost vertex to $v$.
These iterations are simulated by a single invocation of 
 {\addNewVertices} that shifts $A$ by this number of vertices, computes the vertices to be added to $D$ and updates $f$ accordingly.
Function {\addNewVertices} achieves its goal by moving the right endpoint of $A$ to the last vertex of the current bubble and then to the first vertex of the next bubble in an alternating manner.
In the latter case it simply performs one step of {\alggreedy}.
In the former case it takes advantage of Lemma \ref{lem:rightbubble} that enables us to compute the vertices to be added to $D$ by just considering the current attack $A$ (since the rightmost vertex of $A$ remains in the same bubble).

Given the correctness of Algorithm  {\alggreedy}, Algorithm  {\alggreedyBubble} is correct, as it is equivalent in its actions.
\end{proof}

The following theorem establishes the time complexity of our algorithm.
\begin{theorem}
Algorithm {\alggreedyBubble}  can be implemented so as to return an optimal solution of {\kdefprobl}
in time $\bigoh(n + \abs{\bb} \cdot \log k)$.
\end{theorem}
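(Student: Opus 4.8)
Since Algorithm~{\alggreedyBubble} has already been proved correct, the plan is to exhibit an implementation that meets the stated running time. The preprocessing is immediate: by part~3 of Theorem~\ref{thm:Bubbles} a compact bubble representation of $G$ is computed in $\bigoh(n)$ time, and by the preceding lemma a linear bubble representation together with $\abs{B},\max(B),\max(N[B]),\min(B),\min(N[B])$ for every bubble is then obtained in $\bigoh(\abs{\bb})$ further time; since every bubble of a linear representation is non-empty we have $\abs{\bb}\le n$, so this stage --- and the materialization of the final set $D$, which may have $\Theta(n)$ elements --- costs $\bigoh(n)$.

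The heart of the implementation is to keep the current attack $A$ and defense $f$ only implicitly. The attack is the pair of integers $(firstIndex,lastIndex)$. The defense is represented by its set of \emph{bubbles}, each tagged with $slack_i(f)$, stored in a balanced search tree ordered by bubble index~$i$; because a rightmost monotonic defense sends larger attack vertices to larger defenders, this order also sorts the bubbles of $f$ by $f_i=\max(N[B_i])-slack_i(f)$. I would augment the tree with the minimum slack over each subtree and keep one global additive offset, so that the joint shift of Lines~\ref{lin:shiftA}--\ref{lin:shiftF} ($A\gets A+slack(f)$, $f\gets f+slack(f)$) is an $\bigoh(1)$ update. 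Then every operation the algorithm needs costs $\bigoh(\log(\text{number of bubbles of }f))=\bigoh(\log k)$, since a $k$-attack is defended using at most $k$ defenders, hence at most $k$ bubbles: $slack(f)$ is read at the root; {\zeroSlackVertex}$(f)$ is a descent to the largest index of slack $0$, after which $v=f(\max(B_{i^*}))=\max(N[B_{i^*}])$; {\removeLeft}$(move)$ increments $firstIndex$ and deletes the index-prefix of bubbles with $f_i<firstIndex$ --- exactly those whose used defenders all defended removed attack vertices, the surviving bubbles keeping $f(\max(B_i))$ and hence their slack; and the defenders inserted by {\addNewVertices}, being the rightmost vertices of $N[A]\setminus D$, lie to the right of all earlier defenders, so they only append bubbles at the right end of the tree. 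I would also keep $D$ implicitly as an augmented tree of its $\bigoh(\abs{\bb})$ maximal intervals, so that, using $N[A]=[\min(N[\min A])..\max(N[\max A])]$, the ``$\delta$ rightmost vertices of $N[A]\setminus D$'' are produced in $\bigoh(\log k)$.

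It remains to bound how many such operations occur. First, a positive-slack iteration is always followed immediately by a zero-slack one, because after $f\gets f+slack(f)$ the bottleneck bubble has slack $slack_{i^*}(f)-slack(f)=0$; hence there are at most as many positive-slack as zero-slack iterations. Second, each zero-slack iteration permanently retires a bubble of $f$: in the case $move=v-firstIndex+1$, after {\removeLeft} we get $\min(A)=v+1>\max(N[B_{i^*}])$, so $B_{i^*}$ --- a bubble of $f$ at that point --- and every bubble to its left can never again be a bubble of $f$; the only other case, $move=n-lastIndex$, occurs $\bigoh(1)$ times and ends the run. So there are $\bigoh(\abs{\bb})$ zero-slack iterations, hence $\bigoh(\abs{\bb})$ main-loop iterations in all. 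Moreover, the inner loop of {\addNewVertices} reaches the boundary of an attack bubble in every step but its last, so, as $lastIndex$ moves monotonically through the $\abs{\bb}$ bubbles, it runs $\bigoh(\abs{\bb})$ steps over the whole execution. Finally, the bubbles of $f$ evolve as a deque --- appended at the right, removed as a left prefix, never re-created once removed --- so each bubble causes at most one insertion and one deletion, giving $\bigoh(\abs{\bb})$ tree updates total. Charging $\bigoh(\log k)$ to each of these $\bigoh(\abs{\bb})$ operations and adding the $\bigoh(n)$ preprocessing and output yields $\bigoh(n+\abs{\bb}\cdot\log k)$.

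The main obstacle, I expect, is the amortized accounting: one must argue carefully that {\removeLeft} always strips an index-prefix of $f$'s bubbles and that a bubble, once it leaves $f$, never returns --- this is what makes the deque picture faithful --- and that the ``retire a bubble'' charge really applies to every zero-slack iteration, boundary cases included. Given that, augmenting the search tree so it simultaneously supports the uniform shift, the minimum-slack query, and the rightmost-zero-slack query in $\bigoh(\log k)$ is routine.
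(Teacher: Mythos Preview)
Your proposal is correct and follows essentially the same approach as the paper: both use a global additive offset so that the shift at Lines~\ref{lin:shiftA}--\ref{lin:shiftF} is $\bigoh(1)$, a priority structure over the bubbles of $f$ to obtain $slack(f)$ and the rightmost zero-slack bubble, and the amortized argument that every bubble enters and leaves $f$ at most once to bound the number of iterations and structure updates by $\bigoh(\abs{\bb})$. The only cosmetic differences are in the choice of data structures --- the paper uses a min-heap keyed by $(key_i,-i)$ together with doubly-linked lists for $\bb$, $D$, $\bb\setminus D$, and $f$, whereas you use an augmented balanced search tree for $f$ and an interval tree for $D$ --- and in the phrasing of the iteration bound (the paper argues ``between any two positive-slack iterations a bubble is added to $f$'' and ``each zero-slack iteration removes a bubble from $f$'', which is equivalent to your deque accounting).
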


\begin{proof}
We first describe the data structures that the algorithm maintains.
We add to every bubble $B_i \in \bb$ an additional value $d_i$, where $d_i = \abs{D \cap B_i}$.
We maintain four doubly-linked lists of these bubbles; one for each of $\bb$, $D$, $\bb \setminus D$ and $f$.
As for the set $A$, in addition to the integers $firstIndex$, $lastIndex$ explicitly mentioned in the algorithm, 
we maintain two pointers $firstBubble$ and $lastBubble$ that point to the first and last bubble of $A$, respectively.

The bubbles $B_i \in \bb$ that are used in the defense $f$ are also maintained in a min-Heap $F$ supported by an additional integer $offset$.
The key of a bubble $B_i$ in the heap $F$ is an ordered pair $(key_i, -i)$ such that the invariant $slack_i(f) = key_i - offset$ is maintained.
The items in the heap and the elements of the linked list of $f$ point to each other.

Since every bubble enters at most once and leaves at most once each of the four doubly-linked lists, the number of these operations each of which takes constant time is at most $8 \abs{\bb}$.
Furthermore, since $\abs{F} \leq k$ every insertion into and deletion from the heap $F$ takes $\bigoh(\log k)$ time; the number of such operations is at most $2 \abs{\bb}$.
Therefore, the total time spent to update these data structures is $\bigoh(\abs{\bb} \cdot \log k)$.

We now show that these data structures are sufficient to implement the algorithm and that the rest of the operations take $\bigoh(\abs{\bb})$ time.
For a bubble $B_i$ in $f$ we have $key_i = slack_i(f) + offset$.
Then, $f_i = \max (N[B_i]) + slack_i(f) = \max (N[B_i]) + key_i - offset$ which can be computed in constant time.
Moreover, $slack(f)= \min_{i \in F} key_i - offset$ can be computed in constant time by inspecting the first element of the heap.
Similarly, the function {\zeroSlackVertex} can be implemented in constant time since the first element of the heap contains the rightmost bubble of $f$ with minimum (in this case zero) slack.
When $slack(f) > 0$, the shifting of $f$ at line \ref{lin:shiftF} can be done in constant time by simply incrementing $offset$ by $slack(f)$.

We now consider the number of iterations of the algorithm.
We observe that between every two iterations of the main loop in which $slack(f)>0$, at least one bubble must have been added to $f$.
Therefore, the number of iterations in which $slack(f)>0$ is $\bigoh(\abs{\bb})$.
On the other hand, when $slack(f)=0$, except possibly for the last iteration, at least one bubble is removed from $f$. 
Thus the number of iterations in which $slack(f)=0$ is $\bigoh(\abs{\bb})$.  

Thus,  the overall time complexity is $\bigoh(n + \abs{\bb} \cdot \log k)$ where the term $n$ accounts for the computation of the linear bubble model.
\end{proof}

\section{Concluding Remarks}
In this work, we answer the open question about the complexity of {\kdefprobl} in proper interval graphs by providing a linear-time (polynomial) algorithm. 
It should be noted that the algorithm presented here is not a direct extension of the algorithm designed for co-chain graphs. 
Specifically, our algorithm considers a constant number of attacks per bubble on every instance.
On the other hand, the algorithm in \cite{EFP} for co-chain graphs is based on the idea that it is sufficient to consider only two ``worst" attacks in order to determine a minimum $k$-defensive dominating set in linear time. 
Whether or not it suffices to consider a smaller number of attacks for proper interval graphs, (for instance, a constant number of attacks per column of the bubble representation) is an open question.

Another open research direction would be to identify other graph classes where {\kdefprobl} is polynomial-time solvable. 
More precisely, one could investigate {\kdefprobl} in interval graphs, or graph classes with (not necessarily nested but) structured neighborhoods. 
Such structured neighborhoods are expressed by the existence of special vertex orderings; as such, chain graphs, bipartite permutation graphs and biconvex graphs are good candidates for investigation. 
They are all subclasses of (general) bipartite graphs for which the complexity of {\kdefprobl} is unknown, even for fixed $k$. 
In the same spirit, one could consider $t$-trees defined as graphs that can be obtained from a clique on $t$ vertices by repetitively adding a new vertex and making it adjacent to all vertices of a clique on $t$ vertices in the current graph. 
Finally, we would like to point out that there is no clear relationship between the complexities of {\kdefprobl} and {\kplusdefprobl} although one would intuitively expect that when $k$ gets larger the problem is at least as difficult as for smaller $k$ values.

\section*{\refname}
\bibliographystyle{abbrv}
\bibliography{GraphTheory,References}

\end{document}